\begin{document}
\newtheorem{lemma}{Lemma}
\newtheorem{theorem}[lemma]{Theorem}
\newtheorem{informaltheorem}[lemma]{Informal Theorem}
\newtheorem{informallemma}[lemma]{Informal Lemma}
\newtheorem{corollary}[lemma]{Corollary}
\newtheorem{definition}[lemma]{Definition}
\newtheorem{proposition}[lemma]{Proposition}
\newtheorem{question}{Question}
\newtheorem{problem}{Problem}
\newtheorem{remark}[lemma]{Remark}
\newtheorem{claim}{Claim}
\newtheorem{fact}{Fact}
\newtheorem{challenge}{Challenge}
\newtheorem{observation}{Observation}
\newtheorem{openproblem}{Open Problem}
\newtheorem{openquestion}{Open question}
\newenvironment{LabeledProof}[1]{\bigskip \noindent{\bf Proof of #1:\\ }}{\qed}                        

\newenvironment{proofsketch}{\trivlist\item[]\emph{Proof Sketch}:}%
{\unskip\nobreak\hskip 1em plus 1fil\nobreak$\Box$
\parfillskip=0pt%
\endtrivlist}

\newcommand{\beq}{\begin{equation}}
\newcommand{\eeq}{\end{equation}}
\newcommand{\beas}{\begin{eqnarray*}}
\newcommand{\eeas}{\end{eqnarray*}}

\newcommand{\poly}{\mathrm{poly}}
\newcommand{\eps}{\epsilon}
\newcommand{\e}{\epsilon}
\newcommand{\polylog}{\mathrm{polylog}}
\newcommand{\rob}[1]{\left( #1 \right)} 
\newcommand{\sqb}[1]{\left[ #1 \right]} 
\newcommand{\cub}[1]{\left\{ #1 \right\} } 
\newcommand{\rb}[1]{\left( #1 \right)} 
\newcommand{\abs}[1]{\left| #1 \right|} 
\newcommand{\zo}{\{0, 1\}}
\newcommand{\zonzo}{\zo^n \to \zo}
\newcommand{\zokzo}{\zo^k \to \zo}
\newcommand{\zot}{\{0,1,2\}}

\newcommand{\en}[1]{\marginpar{\textbf{#1}}}
\newcommand{\efn}[1]{\footnote{\textbf{#1}}}

\newcommand{\junk}[1]{}
\newcommand{\prob}[1]{\Pr\left[ #1 \right]}
\newcommand{\expt}[1]{\mbox{E}\sqb{#1}}
\newcommand{\expect}[1]{\mbox{E}\left[ #1 \right]}

\newcommand{\md}[1]{\delta_{#1}}                            
\newcommand{\gf}[1]{G_{#1}}                                     
\newcommand{\nei}[3]{N^{#2}_{#1}\rb{#3}}                
\newcommand{\dg}[2]{d_{#1}\rb{#2}}                          
\newcommand{\dgi}[3]{d_{#1}\rb{#2,#3}}                   
\newcommand{\prb}[2]{p_{#1,#2}}

\newcommand{\BfPara}[1]{\noindent {\bf #1}.}


\begin{titlepage}

\title{Discovery through Gossip}

\author{Bernhard Haeupler \thanks{Computer Science and Artificial
    Intelligence Lab, Massachusetts Institute of Technology, Cambridge, MA 02139, USA. E-mail: {\tt haeupler@mit.edu}}
\and Gopal Pandurangan \thanks{Division of Mathematical
Sciences, Nanyang Technological University, Singapore 637371 and Department of Computer Science, Brown University, Providence, RI 02912, USA.  \hbox{E-mail}:~{\tt gopalpandurangan@gmail.com}. Supported in part by the following grants: Nanyang Technological University grant M58110000, Singapore Ministry of Education (MOE) Academic Research Fund (AcRF) Tier 2 grant MOE2010-T2-2-082,
US NSF grant CCF-1023166, and a grant from the US-Israel Binational Science
Foundation (BSF).} \and David Peleg \thanks{Department of Computer Science and Applied Mathematics, The Weizmann
Institute of Science, Rehovot, 76100 Israel.
E-mail: {\tt david.peleg@weizmann.ac.il}.
Supported by a grant from the United States-Israel Binational Science
Foundation (BSF).}
  \and Rajmohan Rajaraman \thanks{College of Computer and Information Science,
  Northeastern University, Boston  MA 02115, USA.
E-mail: {\tt \{rraj,austin\}@ccs.neu.edu}.  Supported in part by NSF grant CNS-0915985.} \and  Zhifeng Sun~$^\S$}

\date{}
 
\maketitle

\thispagestyle{empty}
\begin{abstract}
We study randomized gossip-based processes in dynamic networks that are motivated by discovery processes in large-scale distributed networks like peer-to-peer or social networks.\smallskip

A well-studied problem in peer-to-peer networks is the resource discovery problem. There, the goal for nodes (hosts with IP addresses) is to discover the IP addresses of all other hosts. In social networks, nodes (people) discover new nodes through exchanging contacts with their neighbors (friends). In both cases the discovery of new nodes changes the underlying network - new edges are added to the network - and the process continues in the changed network. Rigorously analyzing such dynamic (stochastic) processes with a continuously self-changing topology remains a challenging problem with obvious applications.\smallskip

This paper studies and analyzes two natural gossip-based discovery processes. In the push process, each node repeatedly chooses two random neighbors and puts them in contact (i.e., ``pushes'' their mutual information to each other). In the pull discovery process, each node repeatedly requests or ``pulls'' a random contact from a random neighbor. Both processes are lightweight, local, and naturally robust due to their randomization.\smallskip

Our main result is an almost-tight analysis of the time taken for these two randomized processes to converge. We show that in any undirected $n$-node graph both processes take $O(n \log^2 n)$ rounds to connect every node to all other nodes with high probability, whereas $\Omega(n \log n)$ is a lower bound. In the directed case we give an $O(n^2 \log n)$ upper bound and an $\Omega(n^2)$ lower bound for strongly connected directed graphs. A key technical challenge that we overcome is the analysis of a randomized process that itself results in a constantly changing network which leads to complicated dependencies in every round.

 \end{abstract}

{\bf Keywords:} Random process, Resource discovery, Social network, Gossip-based algorithm,
Distributed algorithm, Probabilistic analysis

\end{titlepage}

\section{Introduction}
Many large-scale, real-world networks such as peer-to-peer networks,
the Web, and social networks are highly dynamic with continuously
changing topologies. The evolution of the network as a whole is
typically determined by the decentralized behavior of nodes, i.e., the
local topological changes made by the individual nodes (e.g., adding
edges between neighbors).  Understanding the dynamics of such local
processes is critical for both analyzing the underlying stochastic
phenomena, e.g., in the emergence of structures in social networks,
the Web and other real-world networks \cite{b1,b2,b3}, and designing
practical algorithms for associated algorithmic problems, e.g., in
resource discovery in distributed networks \cite{leighton,law-siu} or
in the analysis of algorithms for the Web \cite{frieze1, frieze2}.  In
this paper, we study the dynamics of network evolution that result
from {\em local} gossip-style processes. Gossip-based processes have
recently received significant attention because of their simplicity of
implementation, scalability to large network size, and robustness to
frequent network topology changes; see, e.g., \cite{demers, kempe1,
  kempe2, chen-spaa, kempe, karp, shah, boyd, ozalp1, ozalp2} and the
references therein.  In particular, gossip-based protocols have been
used to efficiently and robustly construct various overlay topologies
dynamically in a fully decentralized manner \cite{ozalp1}.  In a local
gossip-based algorithm (e.g., \cite{chen-spaa}), each node exchanges
information with a small number of randomly chosen neighbors in each
round.\footnote{Gossip, in some contexts (see e.g.,
  \cite{karp,kempe}), has been used to denote communication with a
  random node in the network, as opposed to only a directly connected
  neighbor.  The former model essentially assumes that the underlying
  graph is complete, whereas the latter (as assumed here) is more
  general and applies even to arbitrary graphs. The local gossip
  process is typically more difficult to analyze due to the
  dependences that arise as the network evolves.}  The randomness
inherent in the gossip-based protocols naturally provides robustness,
simplicity, and scalability. While many of the recent theoretical
gossip-based work (including those on rumor spreading), especially,
the {\em push-pull} type algorithms (\cite{karp, kempe, chen-spaa,
  doerr, flavio, giakkoupis}) focus on analyzing various gossip-based
tasks (e.g., computing aggregates or spreading a rumor) on {\em
  static} graphs, a key feature of this work is rigorously analyzing a
gossip-based process in a {\em dynamically changing} graph.
     
We present two illustrative application domains for our study.  First,
consider a P2P network, where nodes (computers or end-hosts with
IDs/IP addresses) can communicate only with nodes whose IP address are
known to them.  A basic building block of such a dynamic distributed
network is to efficiently discover the IP addresses of all nodes that
currently exist in the network.  This task, called {\em resource
  discovery} \cite{leighton}, is a vital mechanism in a dynamic
distributed network with many applications~\cite{leighton,ittai}: when
many nodes in the system want to interact and cooperate they need a
mechanism to discover the existence of one another.  Resource
discovery is typically done using a local mechanism \cite{leighton};
in each {\em round}\/ nodes discover other nodes and this changes the
resulting network --- new edges are added between the nodes that
discovered each other.  As the process proceeds, the graph becomes
denser and denser and will finally result in a complete graph.  Such a
process was first studied in \cite{leighton} which showed that a
simple randomized process is enough to guarantee almost-optimal time
bounds for the time taken for the entire graph to become complete
(i.e., for all nodes to discover all other nodes). Their randomized
{\em Name Dropper} algorithm operates as follows: in each round, each
node chooses a random neighbor and sends {\em all} the IP addresses it
knows.  Note that while this process is also gossip-based the
information sent by a node to its neighbor can be extremely large
(i.e., of size $\Omega(n)$).  More recently, self-stabilization
protocols have been designed for constructing and maintaining P2P
overlay networks e.g, \cite{berns,jacob}. These protocols guarantee
convergence to a desired overlay topology (e.g., the SKIP+ graph)
starting from any arbitrary topology via local checking and repair.
For example, the self-stabilizing protocol of \cite{berns} proceeds by
continuously discovering new neighbors (via transitive closure) till a
complete graph is formed. Then the repair process is initiated. This
can also be considered as a local gossip-based process in an
underlying virtual graph with changing (added) edges. \junk{However,
  the process is not lightweight as information sent by a node to its
  neighbor can be extremely large (i.e., of size $\Omega(n)$).}  In
both the above examples, the assumption is that the starting graph is
arbitrary but (at least) weakly connected.  The gossip-based processes
that we study also have the same goal --- starting from an arbitrary
connected graph, each node discovers all nodes as quickly as possible
-- in a setting where individual message sizes are small ($O(\log n)$
bits).

Second, in social networks, nodes (people) discover new nodes through
exchanging contacts with their neighbors (friends). Discovery of new
nodes changes the underlying network --- new edges are added to the
network --- and the process continues in the changed network.  For
example, consider the {\em LinkedIn}
network\footnote{\url{http://www.linkedin.com}.}, a large social
network of professionals on the Web. The nodes of the network
represent people and edges are added between people who directly know
each other --- between direct contacts.  Edges are generally
undirected, but LinkedIn also allows directed edges, where only one
node is in the contact list of another node.  LinkedIn allows two
mechanisms to discover new contacts.  The first can be thought of as a
{\em triangulation} process (see Figure~\ref{fig:intro}(a)): A person
can introduce two of his friends that could benefit from knowing each
other --- he can mutually introduce them by giving their contacts. The
second can be thought of as a {\em two-hop} process (see
Figure~\ref{fig:intro}(b)): If {\em you} want to acquire a new contact
then you can use a shared (mutual) neighbor to introduce yourself to
this contact; i.e., the new contact has to be a two-hop neighbor of
yours.  Both the processes can be modeled via gossip in a natural way
(as we do shortly below) and the resulting evolution of the network
can be studied: e.g., how and when do clusters emerge?  how does the
diameter change with time?  In the social network context, our study
focuses on the following question: how long does it take for all the
nodes in a connected induced subgraph of the network to discover all
the nodes in the subgraph?  This is useful in scenarios where members
of a social group, e.g., alumni of a school, members of a club,
discover all members of the group through local gossip operations.

\begin{figure}[ht]
\begin{center}
  \includegraphics[width=6in]{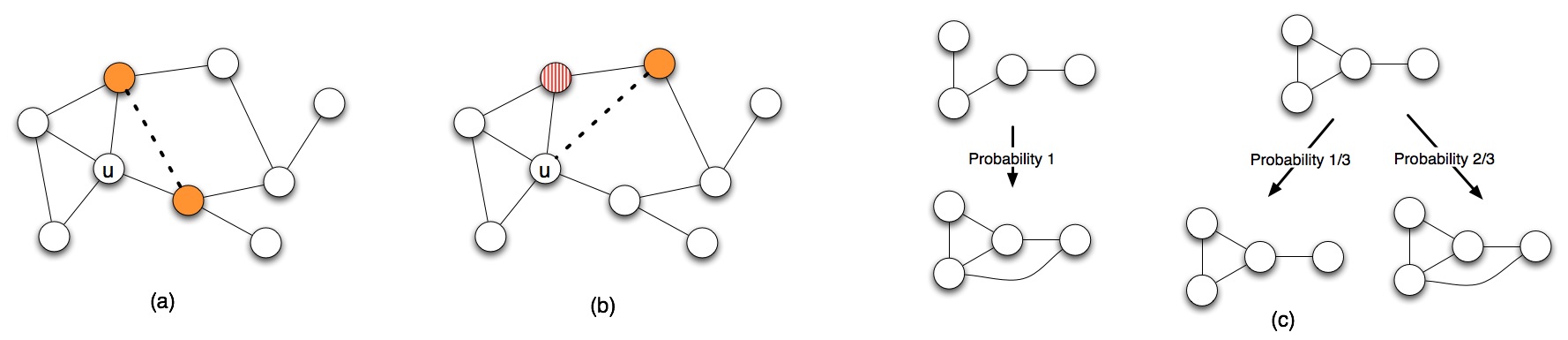}
 \caption{(a) Push discovery or triangulation process. (b) Pull
   discovery or two-hop walk process. (c) Non-monotonicity of the
   triangulation process -- the expected convergence time for the
   4-edge graph exceeds that for the 3-edge
   subgraph.\label{fig:intro}}
\end{center}
\end{figure}

\BfPara{Gossip-based discovery}  Motivated directly by the above
applications, we analyze two lightweight, randomized gossip-based
discovery processes.  We assume that we start with an arbitrary
undirected connected graph and the process proceeds in synchronous
rounds.  Communication among nodes occurs only through edges in the
network. We further assume that the size of each message sent by a
node in a round is at most $O(\log n)$ bits, i.e., the size of an ID.
  \begin{enumerate}
\item {\sf Push discovery (triangulation)}: In each round, each
  node chooses two random neighbors and connects them by ``pushing''
  their mutual information to each other. In other words, each node
  adds an undirected edge between two of its random neighbors; if the
  two neighbors are already connected, then this does not create any
  new edge.  Note that this process, which is illustrated in
  Figure~\ref{fig:intro}(a), is completely local.  To execute the
  process, a node only needs to know its neighbors; in particular, no
  two-hop information is needed. Note that this is similar in spirit to the {\em triangulation} procedure of Linkedin described earlier, i.e., a node completes a triangle with two of its chosen neighbors. \footnote{However, we note that in our process the two neighbors are chosen randomly, unlike in LinkedIn.}  

 \item {\sf Pull discovery (two-hop walk)}: In each round, each node
   connects itself to a random neighbor of a neighbor chosen uniformly
   at random, by ``pulling'' a random neighboring ID from a random
   neighbor.  Alternatively, one can think of each node doing a
   two-hop random walk and connecting to its destination.  This
   process, illustrated in Figure~\ref{fig:intro}(b), can also be
   executed locally: a node simply asks one of its neighbors $v$ for
   an ID of one of $v$'s neighbors and then adds an undirected edge to
   the received contact.  Note that this is similar in spirit to the
   {\em two-hop} procedure of LinkedIn described earlier
   \footnote{Again, one difference is that in the process we analyze
     the particular each node in the two-hop walk is chosen uniformly
     at random from the appropriate neighborhood.}.
 \end{enumerate}
  
  Both the above processes are local in the sense that each node only
  communicates with its neighbors in any round, and lightweight in the
  sense that the amortized work done per node is only a constant per
  round.  Both processes are also easy to implement and generally
  oblivious to the current topology structure, changes or failures.
  It is interesting also to consider variants of the above processes
  in directed graphs. In particular, we study the two-hop walk process
  which naturally generalizes in directed graphs: each node does a
  two-hop directed random walk and adds a {\em directed}\/ edge to its
  destination.  We are mainly interested in the time taken by the
  process to converge to the {\em transitive closure} of the initial
  graph, i.e., till no more new edges can be added.  \junk{In an
    undirected graph, the processes will converge to a complete graph,
    while that may not necessarily be the case in directed graphs.}
  

\smallskip  
\BfPara{Our results}   
Our main contribution is an analysis of the above gossip-based
discovery processes in both undirected and directed graphs.  In
particular, we show the following results (the precise theorem
statements are in the respective sections.)

\begin{itemize}
\item {\bf Undirected graphs:} In Sections~\ref{sec:triangulation-10p}
  and \ref{sec:2hop-10p}, we show that for {\em any} undirected
  $n$-node graph, both the push and the pull discovery processes
  converge in $O(n\log^2 n)$ rounds with high probability.  We also
  show that $\Omega(n \log n)$ is a lower bound on the number of
  rounds needed for almost any $n$-node graph. Hence our analysis is
  tight to within a logarithmic factor.  Our results also apply when
  we require only a subset of nodes to converge.  In particular,
  consider a subset of $k$ nodes that induce a connected subgraph and
  run the gossip-based process {\em restricted to this subgraph}.
  Then by just applying our results to this subgraph, we immediately
  obtain that it will take $O(k\log^2 k)$ rounds, with high
  probability (in terms of $k$), for all the nodes in the subset to
  converge to a complete subgraph.  As discussed above, such a result
  is applicable in social network scenarios where all nodes in a
  subset of network nodes discover one another through gossip-based
  processes.
    
 \item {\bf Directed graphs:} In Section \ref{sec:directed-10p}, we show
   that the pull process takes $O(n^2 \log n)$ time for any $n$-node
   directed graph, with high probability.  We show a matching lower
   bound for weakly connected graphs, and an $\Omega(n^2)$ lower bound
   for strongly connected directed graphs.  Our analysis indicates
   that the directionality of edges can greatly impede the resource
   discovery process.  \junk{
  \item Can we talk about the message (communication) complexity 
  and the bit complexity of our algorithms (e.g., these are done
  in prior works in resource discovery, see e.g., the paper by Abraham
  and Dolev --- available in our kdissemination website.)
   
  \item Other results to add ? --- e.g., robustness to failures, only subset of nodes participating etc.
}
\end{itemize}  

\BfPara{Applications} 
The gossip-based discovery processes we study are directly motivated
by the two scenarios outlined above, namely algorithms for resource
discovery in distributed networks and analyzing how discovery process
affects the evolution of social networks. Since our processes are
simple, lightweight, and easy to implement, they can be used for
resource discovery in distributed networks.  The {\em Name Dropper}\/
discovery algorithm has been applied to content delivery
systems\cite{leighton}.  As mentioned earlier, {\em Name Dropper}\/
and other prior algorithms for the discovery problem \cite{leighton,
  law-siu, kutten, ittai} complete in polylogarithmic number of rounds
($O(\log^2 n)$ or $O(\log n)$), but may transfer $\Theta(n)$ bits per
edge per round.  As a result, they may not be scalable for bandwidth
and resource-constrained networks (e.g., peer-to-peer, mobile, or
sensor networks).  One approach to use these algorithms in a
bandwidth-limited setting ($O(\log n)$-bits per message) is to spread
the transfer of long messages over a linear number of rounds, but this
requires coordination and maintaining state.  In contrast, the
``stateless'' nature of the gossip processes we study and the fact
that the results apply to any initial graph make the process
attractive in unpredictable environments.  \junk{ In contrast, the
  {\em Name Dropper} algorithm of \cite{leighton}, . We note that,
  however, because there is essentially no restriction on the
  bandwidth, the number of rounds taken by the {\em Name Dropper}
  algorithm is $O(\log^2 n)$. (We note that in our model, $\Omega(n)$
  is a trivial lower bound).} Our analyses can also give insight into
the growth of real-social networks such as LinkedIn, Twitter, or
Facebook, that grow in a decentralized way by the local actions of the
individual nodes.  In addition to the application of discovering all
members of a group, analyses of the processes such as the ones we
study can help analyze both short-term and long-term evolution of
social networks.  In particular, it can help in predicting the sizes
of the immediate neighbors as well as the sizes of the second and
third-degree neighbors (these are listed for every node in LinkedIn).
An estimate of these can help in designing efficient algorithms and
data structures to search and navigate the social network.

\smallskip
\BfPara{Technical contributions} Our main technical contribution is a
probabilistic analysis of localized gossip-based discovery in
arbitrary networks.  While our processes can be viewed as graph-based
coupon collection processes, one significant distinction with past
work in this
area~\cite{adler+hkv:p2p,alon:combinatorics,dimitriov+p:coupon} is
that the graphs in our processes are constantly changing.  The
dynamics and locality inherent in our process introduces nontrivial
dependencies, which makes it difficult to characterize the network as
it evolves.  A further challenge is posed by the fact that the
expected convergence time for the two processes is {\em not
  monotonic}; that is, the processes may {\em take longer}\/ to
converge starting from a graph $G$ than starting from a subgraph $H$
of $G$.  Figure~\ref{fig:intro}(c) presents a small example
illustrating this phenomenon.  This seemingly counterintuitive
phenomenon is, however, not surprising considering the fact that the
cover time of random walks also share a similar property.  One
consequence of these hurdles is that analyzing the convergence time
for even highly specialized or regular graphs is challenging since the
probability distributions of the intermediate graphs are hard to
specify.  Our lower bound analysis for a specific strongly connected
directed graph in Theorem~\ref{thm:directed.lower-10p} illustrates
some of the challenges.  In our main upper bound results
(Theorems~\ref{thm:triangulation-10p}
and~\ref{thm:graph+randwalk-10p}), we overcome these technical
difficulties by presenting a uniform analysis for all graphs, in which
we study different local neighborhood structures and show how each
leads to rapid growth in the minimum degree of the graph.

\junk{
\paragraph{Other related work.} ?

\paragraph{Organization of the paper.} Giving a road map of the sections here
can be useful...
 }

\section{Preliminaries}
\label{sec:prelim}
In this section, we define the notations used in our proofs, and prove
some common lemmas for Section \ref{sec:triangulation-10p} and Section
\ref{sec:2hop-10p}.  Let $G$ denote a connected graph, $d(u)$ denote
the degree of node $u$, and $N^i(u)$ denote the set of nodes that are
at distance $i$ from $u$. Let $\delta$ denote the minimum degree of
$G$. We note that $G$, $d(u)$, and $N^i(u)$ all change with time, and
are, in fact, random variables. For any nonnegative integer $t$, we
use subscript $t$ to denote the random variable at the start of round
$t$; for example $\gf{t}$ refers to the graph at the start of round
$t$. For convenience, we list the notations in
Table~\ref{tab:notation-10p}.

\begin{table}[htp]
\caption{Notation table \label{tab:notation-10p}}
\begin{center}
\begin{tabular}{|l|l|}
\hline
Notation & description \\
\hline
$\md{t}$ & minimum degree of graph $\gf{t}$ \\
$\nei{t}{i}{u}$ & set of nodes that are at distance $i$ from $u$ in
$\gf{t}$ \\
$\abs{\nei{t}{i}{u}}$ & number of nodes in $\nei{t}{i}{u}$ \\
$\dg{t}{u}$ & degree of node $u$ in $\gf{t}$ \\
$\dgi{t}{u}{\nei{t}{i}{v}}$ & number of edges from $u$ to nodes in
$\nei{t}{i}{v}$, i.e., degree induced on $\nei{t}{i}{v}$ \\
\hline
\end{tabular}
\end{center}
\end{table}

We state two lemmas that are used in the proofs in Section
\ref{sec:triangulation-10p} and Section \ref{sec:2hop-10p}. Lemma
\ref{lem:moreneighbor-10p} gives a lower bound on the number of
neighbors within distance 4 for any node $u$ in $\gf{t}$ while Lemma
\ref{lem:couponcorollary-10p} is a standard analysis of a sequence of
Bernoulli experiments and can be proved by a direct coupon collector
argument or using a Chernoff bound.  The proofs are included in
Appendix~\ref{app:prelim} for completeness.

\begin{lemma}
\label{lem:moreneighbor-10p}
$\abs{\cup_{i=1}^4 \nei{t}{i}{u}} \ge \min\cub{2\md{t},n-1}$ for all
$u$ in $\gf{t}$.
\end{lemma}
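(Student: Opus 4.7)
My plan is to proceed by a simple case split on whether the third BFS layer around $u$ in $G_t$ is empty, leveraging the disjointness of BFS layers and the minimum-degree bound. All notation is for round $t$, and I suppress the subscript $t$ in this sketch.

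\medskip

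\noindent\textbf{Case 1: $N^3(u)\neq\emptyset$.} Pick any $w\in N^3(u)$. Since $w$ is at distance exactly $3$ from $u$, each neighbor of $w$ is at distance $2$, $3$, or $4$ from $u$, so $N^1(w)\subseteq N^2(u)\cup N^3(u)\cup N^4(u)$. Because $w$ itself lies in $N^3(u)$ but is not its own neighbor, we get
\[
\bigl|N^2(u)\cup N^3(u)\cup N^4(u)\bigr|\ \ge\ d(w)+1\ \ge\ \delta+1.
\]
Since the BFS layers $N^1(u),N^2(u),N^3(u),N^4(u)$ are pairwise disjoint, and $|N^1(u)|=d(u)\ge\delta$, I conclude
\[
\Bigl|\bigcup_{i=1}^{4}N^i(u)\Bigr|\ \ge\ |N^1(u)|+\bigl|N^2(u)\cup N^3(u)\cup N^4(u)\bigr|\ \ge\ 2\delta+1\ \ge\ \min\{2\delta,n-1\}.
\]

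\medskip

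\noindent\textbf{Case 2: $N^3(u)=\emptyset$.} Then $N^i(u)=\emptyset$ for every $i\ge 3$, and since $G$ is connected, every vertex of $G$ lies in $\{u\}\cup N^1(u)\cup N^2(u)$. Therefore $|N^1(u)\cup N^2(u)|=n-1$, and $\bigl|\bigcup_{i=1}^4 N^i(u)\bigr|=n-1\ge\min\{2\delta,n-1\}$.

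\medskip

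There is no real obstacle here; the only mild subtlety is seeing that invoking a vertex in $N^3(u)$ (rather than trying to bound $|N^1(u)\cup N^2(u)|$ directly from the minimum-degree condition on $u$ or its neighbors) is what gives a \emph{second} $\delta$-sized chunk of the neighborhood disjoint from $N^1(u)$. A naive attempt that counts only edges incident to $u$ and its immediate neighbors easily stalls at $|N^1(u)|\ge\delta$, so the trick is to walk one more step out and exploit disjointness of BFS layers; the trivial Case 2 simply handles the situation where there is no vertex three steps away to walk to.
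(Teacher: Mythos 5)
Your proof is correct and follows essentially the same route as the paper's: both split on whether $\nei{t}{3}{u}$ is empty, and in the nonempty case both pick a vertex at distance $3$ and use the fact that its $\ge\md{t}$ neighbors all lie in layers $2$ through $4$, which are disjoint from $\nei{t}{1}{u}$. Your additional ``$+1$'' from counting the distance-$3$ vertex itself is a harmless minor strengthening.
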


\junk{
\begin{proof}
  If $\nei{t}{3}{u}$ is not an empty set, consider node $v\in
  \nei{t}{3}{u}$. Since $\dg{t}{v} \ge \md{t}$, we have
  $\abs{\cup_{i=2}^4 \nei{t}{i}{u}} \ge
  \md{t}$. $\abs{\nei{t}{1}{u}}\ge \md{t}$ because $\dg{t}{u}\ge
  \md{t}$. We also know $\nei{t}{1}{u}$ and $\cup_{i=2}^4
  \nei{t}{i}{u}$ are disjoint. Thus, $\abs{\cup_{i=1}^4 \nei{t}{i}{u}}
  \ge 2\md{t}$.  If $\nei{t}{3}{u}$ is an empty set, then
  $\nei{t}{1}{u}\cup \nei{t}{2}{u}=n-1$ because $\gf{t}$ is
  connected. Thus $\abs{\cup_{i=1}^4 \nei{t}{i}{u}} = n-1$. Combine
  the above 2 cases, we complete the proof of this lemma.
\end{proof}
}

\begin{lemma}
\label{lem:couponcorollary-10p}
Consider $k$ Bernoulli experiments, in which the success probability
of the $i$th experiment is at least $i/m$ where $m \ge k$.  If $X_i$
denotes the number of trials needed for experiment $i$ to output a
success and $X=\sum_{i=1}^k X_i$, then $\prob{X>(c+1)n\ln n}$ is less
than $1/n^c$.
\end{lemma}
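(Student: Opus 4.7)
The plan is to reduce the claim to the standard coupon collector tail bound via a stochastic-domination coupling, as suggested in the lemma's preceding remark.

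First, I would observe that since experiment $i$ succeeds on each trial with probability at least $i/m$, the number of trials $X_i$ is stochastically dominated by a geometric random variable $Z_i$ with parameter exactly $i/m$. Because the experiments are independent and stochastic domination is preserved under independent summation, $X = \sum_{i=1}^k X_i$ is dominated by $Z := \sum_{i=1}^k Z_i$, so it suffices to bound the upper tail of $Z$.

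Next, I would interpret $Z$ as a segment of a standard coupon collector process on $m$ distinct coupon types. In such a process, when exactly $i$ types remain uncollected, the number of uniform draws needed to obtain a new type is Geom($i/m$); hence the sum of Geom($i/m$) over $i = 1, \ldots, k$ is exactly the distribution of the time to collect the last $k$ of the $m$ coupons. Since $k \le m$, $Z$ is in turn stochastically dominated by the full coupon collector time $T$ needed to collect all $m$ coupons.

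Finally, I would apply the classical union-bound argument to $T$: for any fixed coupon, the probability it is not drawn in $t$ independent uniform trials is $(1-1/m)^t \le e^{-t/m}$; taking $t = (c+1)\, m \ln m$ yields $m^{-(c+1)}$, and a union bound over the $m$ types gives $\Pr[T > (c+1)\, m \ln m] \le m^{-c}$. Chaining the three dominations yields the stated bound. (The $n$ appearing in the statement plays the role of $m$ in this argument; the calculation is identical.) There is no real obstacle here — the argument is routine; the only point requiring minor care is the passage from per-experiment domination $X_i \preceq Z_i$ to the sum, which holds since the $X_i$ and the coupled $Z_i$ are independent across $i$.
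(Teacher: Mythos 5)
Your proposal is correct and follows essentially the same route as the paper's proof: both reduce to the classical coupon collector process on $m$ types (the paper's ``WLOG $k=m$'' step is your domination of the partial collection time by the full one) and finish with the per-coupon union bound. You are merely more explicit about the stochastic-domination steps, and you rightly flag the same $m$-versus-$n$ conflation that the paper's own statement and proof also gloss over.
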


\junk{
\begin{proof}
  Since $X$ only increases with $k$, with out loss of generality
  assume that $k = m$. Now we can view this as {\em coupon collector
    problem} \cite{upfal} where $X_{m+1-i}$ is the number of steps to
  collect the $i$th coupon. Consider the probability of not obtaining
  the $i$th coupon after $(c+1)n\ln n$ steps. This probability is
\[\rb{1-\frac{1}{n}}^{(c+1)n\ln n} < e^{-(c+1)\ln n} =
\frac{1}{n^{c+1}}\] By union bound, the probability that some coupon
has not been collected after $(c+1)n\ln n$ steps is less than
$1/n^c$. And this completes the proof of this lemma.
\end{proof}
}

\section{Proofs for the triangulation process}
\label{sec:triangulation-10p}
In this section, we analyze the triangulation process on undirected
connected graphs, which is described by the following simple
iteration: In each round, for each node $u$, we add edge $(v,w)$
where $v$ and $w$ are drawn uniformly at random from
$\nei{t}{1}{u}$. The triangulation process yields the following
push-based resource discovery protocol. In each round, each node $u$
introduces two random neighbors $v$ and $w$ to one another.  The main
result of this section is that the triangulation process transforms an
arbitrary connected $n$-node graph to a complete graph in $O(n
\log^2 n)$ rounds with high probability.  We also establish an
$\Omega(n\log n)$ lower bound on the triangulation process for almost
all $n$-node graphs.

\subsection{Upper bound}
We obtain the $O(n \log^2 n)$ upper bound by proving that the minimum
degree of the graph increases by a constant factor (or equals $n-1$)
in $O(n \log n)$ steps.  Towards this objective, we study how the
neighbors of a given node connect to the two-hop neighbors of the
node.  We say that a node $v$ is {\bf {\em weakly tied}} to a set
of nodes $S$ if $v$ has less than $\md{0}/2$ edges to $S$
(i.e., $\dgi{t}{v}{S}<\md{0}/2$), and {\bf {\em strongly tied}} to $S$
if $v$ has at least $\md{0}/2$ edges to $S$ (i.e., $\dgi{t}{v}{S}
\ge\md{0}/2$).  (Recall that $\md{0}$ is the minimum degree at start of
round $0$.)
\begin{lemma}
\label{ob:strong-10p}
If $\md{0} \le \dg{t}{u} < (1+1/4)\md{0}$ and $w\in \nei{0}{1}{u}$ is
strongly tied to $\nei{t}{2}{u}$, then the probability that $u$
connects to a node in $\nei{t}{2}{u}$ through $w$ in round $t$ is at
least $2/(7n)$.
\junk{
\[\prob{u\mbox{ connects to a node in }\nei{t}{2}{u}\mbox{ through
  }w\mbox{ in round }t} \ge \frac{2}{7n} \]}
\end{lemma}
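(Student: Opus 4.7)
My plan is to unpack the event, compute the probability of $w$'s round-$t$ action, and then combine two lower bounds on the key neighborhood intersection. The event "$u$ connects to a node in $\nei{t}{2}{u}$ through $w$ in round $t$" is precisely the event that $w$'s triangulation in round $t$ selects $\{u, x\}$ for some $x \in \nei{t}{2}{u} \cap \nei{t}{1}{w}$, thereby creating the new edge $(u, x)$. Since $w$ draws its two random neighbors uniformly from $\nei{t}{1}{w}$, this event has probability at least $2\abs{\nei{t}{2}{u} \cap \nei{t}{1}{w}}/\dg{t}{w}^2$, the factor $2$ accounting for the two orderings of the chosen pair.

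The crux is a lower bound on $\abs{\nei{t}{2}{u} \cap \nei{t}{1}{w}}$ obtained by combining two observations. First, the strong-ties hypothesis directly gives $\abs{\nei{t}{2}{u} \cap \nei{t}{1}{w}} \ge \md{0}/2$. Second, because $w$ is a neighbor of $u$, every neighbor of $w$ other than $u$ or a neighbor of $u$ must lie at distance exactly $2$ from $u$ via the path $u\text{--}w\text{--}y$; hence $\abs{\nei{t}{2}{u} \cap \nei{t}{1}{w}} \ge \dg{t}{w} - \dg{t}{u} - 1 > \dg{t}{w} - \tfrac{5}{4}\md{0} - 1$. Taking the maximum of the two bounds, the probability $P$ satisfies $P \ge (2/\dg{t}{w}^2) \cdot \max\!\rob{\md{0}/2,\; \dg{t}{w} - \tfrac{5}{4}\md{0} - 1}$.

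I then verify $P \ge 2/(7n)$ by a short case analysis on $\alpha := \dg{t}{w}$. If $\alpha \le \tfrac{7}{4}\md{0} + 1$, the first bound in the maximum dominates and $P \ge \md{0}/\alpha^2 \ge \md{0}/(\tfrac{7}{4}\md{0} + 1)^2$, which one can check exceeds $2/(7n)$ by cross-multiplying and using $\md{0} \le n - 1$ (the leading terms give $16/(49\md{0})$ and $16/49 > 2/7$). If $\alpha > \tfrac{7}{4}\md{0} + 1$, the combinatorial bound dominates; the map $\alpha \mapsto 2(\alpha - \tfrac{5}{4}\md{0} - 1)/\alpha^2$ is unimodal with peak at $\alpha = \tfrac{5}{2}\md{0} + 2$, so its minimum over $[\tfrac{7}{4}\md{0} + 1,\, n - 1]$ is attained at an endpoint. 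The lower endpoint reduces to the previous case, while at $\alpha = n - 1$ a direct calculation using $\md{0} < \tfrac{4}{7}(n - 1)$ (a condition forced whenever this case is non-vacuous) yields the required bound.

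The main obstacle I anticipate is ensuring that the two lower bounds on $\abs{\nei{t}{2}{u} \cap \nei{t}{1}{w}}$ interlock cleanly over the whole range of $\dg{t}{w}$; the unimodality observation is what sidesteps a messy direct analysis of the quadratic inequality $\alpha^2 - 7n\alpha + 7n(\tfrac{5}{4}\md{0} + 1) \le 0$ that would otherwise arise.
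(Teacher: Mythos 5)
Your proof is correct, but it takes a genuinely different route from the paper's, which is a three-line computation. Both arguments ultimately rest on the same structural fact — since $w$ is a neighbor of $u$, every neighbor of $w$ lies in $\{u\}\cup \nei{t}{1}{u}\cup \nei{t}{2}{u}$, hence $\dg{t}{w} \le \abs{\nei{t}{1}{u}} + \dgi{t}{w}{\nei{t}{2}{u}}$ — but they deploy it differently. The paper substitutes this bound for \emph{one} factor of $\dg{t}{w}$ in the denominator (bounding the other factor crudely by $n$), so the probability becomes at least $\frac{x}{\dg{t}{u}+x}\cdot\frac{1}{n}$ with $x=\dgi{t}{w}{\nei{t}{2}{u}}\ge \md{0}/2$ and $\dg{t}{u}<(5/4)\md{0}$; since $x\mapsto x/(c+x)$ is increasing, the bound $\frac{\md{0}/2}{(7/4)\md{0}}\cdot\frac{1}{n}=\frac{2}{7n}$ falls out with no case analysis. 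You instead use the fact as a second lower bound on the numerator $\abs{\nei{t}{1}{w}\cap\nei{t}{2}{u}}$, keep both factors of $\dg{t}{w}$ in the denominator, and optimize over $\dg{t}{w}\in[\md{0},n-1]$ via a two-case unimodality argument; crucially you retain the factor of $2$ from the two orderings of the chosen pair (which the paper discards), and that slack is what makes your endpoint arithmetic close. I checked both endpoints: the left endpoint and the case-$1$ bound $\md{0}/(\tfrac74\md{0}+1)^2 \ge 2/(7n)$ follow from $n\ge\md{0}+1$ when $\md{0}\ge 2$, and for $\md{0}=1$ one needs the separate (true, since $\nei{t}{2}{u}\neq\emptyset$ forces $n\ge 3$) observation that $16/121\ge 2/(7n)$ — so your ``cross-multiplying the leading terms'' remark is slightly glossed but salvageable; the right endpoint works since case $2$ forces $\md{0}<\tfrac47(n-2)$, giving $n-2-\tfrac54\md{0}>\tfrac27(n-2)$. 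What the paper's approach buys is brevity and the avoidance of any optimization over $\dg{t}{w}$; what yours buys is an explicit accounting of why a large-degree $w$ is still useful (it must then have many neighbors in $\nei{t}{2}{u}$), which is the same phenomenon the paper hides inside the inequality $\dg{t}{w}\le\abs{\nei{t}{1}{u}}+\dgi{t}{w}{\nei{t}{2}{u}}$.
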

\begin{proof}
Since $w$ is strongly tied to $\nei{t}{2}{u}$,
$\dgi{t}{w}{\nei{t}{2}{u}} \ge \md{0}/2$.  Therefore, the probability
that $u$ connects to a node in $\nei{t}{2}{u}$ through $w$ in round
$t$ is
\begin{eqnarray*}
& = & \frac{\dgi{t}{w}{\nei{t}{2}{u}}}{\dg{t}{w}} \cdot
\frac{1}{\dg{t}{w}} 
\;\;\; \ge \;\;\; \frac{\dgi{t}{w}{\nei{t}{2}{u}}}{\dg{t}{w}}\cdot \frac{1}{n} 
\;\;\; \ge \;\;\; \frac{\dgi{t}{w}{\nei{t}{2}{u}}}{|\nei{t}{1}{u}|+\dgi{t}{w}{\nei{t}{2}{u}}}\cdot \frac{1}{n} \\
& \ge & 
\frac{\dgi{t}{w}{\nei{t}{2}{u}}}{(1+1/4)\md{0} +
  \dgi{t}{w}{\nei{t}{2}{u}}}\cdot \frac{1}{n} 
\;\;\; \ge \;\;\; \frac{\md{0}/2}{(1+1/4)\md{0} + \md{0}/2}\cdot \frac{1}{n} 
\;\;\; = \;\;\; \frac{2}{7n}.
\end{eqnarray*}
\junk{
\begin{eqnarray*}
& & \prob{u\mbox{ connects to a node in }\nei{t}{2}{u}\mbox{ through
  }w\mbox{ in round }t} \\
&=& \frac{\dgi{t}{w}{\nei{t}{2}{u}}}{\dg{t}{w}} \cdot
\frac{1}{\dg{t}{w}} \\
&\ge& \frac{\dgi{t}{w}{\nei{t}{2}{u}}}{\dg{t}{w}}\cdot \frac{1}{n} \\
&\ge& \frac{\dgi{t}{w}{\nei{t}{2}{u}}}{|\nei{t}{1}{u}|+\dgi{t}{w}{\nei{t}{2}{u}}}\cdot \frac{1}{n} \\
&\ge& \frac{\dgi{t}{w}{\nei{t}{2}{u}}}{(1+1/4)\md{0} +
  \dgi{t}{w}{\nei{t}{2}{u}}}\cdot \frac{1}{n} \\
&\ge& \frac{\md{0}/2}{(1+1/4)\md{0} + \md{0}/2}\cdot \frac{1}{n} \\
&=& \frac{2}{7n}
\end{eqnarray*}}
\end{proof}

\begin{lemma}
\label{ob:weak-10p}
If $\md{0} \le \dg{t}{u} < (1+1/4)\md{0}$, $w\in \nei{0}{1}{u}$ is
weakly tied to $\nei{t}{2}{u}$, and $v\in\nei{0}{2}{u}\cap
\nei{0}{1}{w}$, then the probability that $u$ connects to $v$ through
$w$ in round $t$ is at least $1/(4\md{0}^2)$.
\junk{
\[\prob{u\mbox{ connects to node in }v\mbox{ through }w\mbox{ in
    round }t} \ge \frac{1}{4\md{0}^2} \]
}
\end{lemma}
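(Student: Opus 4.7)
The plan is to observe that the event ``$u$ connects to $v$ through $w$ in round $t$'' is precisely the event that when $w$ performs its triangulation step in round $t$, it selects $u$ and $v$ as its two random neighbors (thereby adding the edge $(u,v)$). Both $u$ and $v$ lie in $\nei{t}{1}{w}$: since $w\in\nei{0}{1}{u}$ and $v\in\nei{0}{1}{w}$ and edges are never removed, both memberships persist into round $t$. Hence the event in question has probability at least $1/\dg{t}{w}^{2}$, and the task reduces to proving the upper bound $\dg{t}{w}< 2\md{0}$.

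To bound $\dg{t}{w}$ I partition $\nei{t}{1}{w}$ relative to $u$. Any node $x\in\nei{t}{1}{w}$ is either $u$ itself, lies in $\nei{t}{1}{u}\setminus\{w\}$, or lies in $\nei{t}{2}{u}$ --- the last case because $x$ is connected to $u$ via $w$, so its distance to $u$ is at most $2$, and if $x$ is not already a neighbor of $u$ this distance equals exactly $2$. Since these three sets are disjoint, counting each group separately yields
\[
\dg{t}{w}\;\le\; 1 \;+\; \rob{\dg{t}{u}-1} \;+\; \dgi{t}{w}{\nei{t}{2}{u}} \;=\; \dg{t}{u} + \dgi{t}{w}{\nei{t}{2}{u}}.
\]
Plugging in the two hypotheses --- the degree constraint $\dg{t}{u}<(1+1/4)\md{0}$ and the weak-tie condition $\dgi{t}{w}{\nei{t}{2}{u}}<\md{0}/2$ --- gives $\dg{t}{w}<(7/4)\md{0}<2\md{0}$, so the desired probability is at least $1/\dg{t}{w}^{2}>1/(4\md{0}^{2})$.

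There is essentially no deep obstacle here; the subtle points worth flagging are recognizing that ``through $w$'' refers to an action performed by $w$ (so \emph{both} of $w$'s random draws must hit prescribed targets, giving probability $1/\dg{t}{w}^{2}$ rather than $1/\dg{t}{w}$), and noticing that the weak-tie hypothesis is tailored precisely to control the single term in the degree decomposition above that is not already bounded by the hypothesis on $\dg{t}{u}$. This is in contrast to Lemma~\ref{ob:strong-10p}, where $\dgi{t}{w}{\nei{t}{2}{u}}$ can be as large as $\dg{t}{w}$ itself, and one is instead forced to use the trivial bound $\dg{t}{w}\le n$ in one of the two factors.
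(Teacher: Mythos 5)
Your proposal is correct and follows essentially the same route as the paper: the event is that $w$'s two independent uniform draws hit $u$ and $v$, giving probability at least $1/\dg{t}{w}^{2}$, and $\dg{t}{w}$ is bounded by $\abs{\nei{t}{1}{u}}+\dgi{t}{w}{\nei{t}{2}{u}} < (1+1/4)\md{0}+\md{0}/2 = 7\md{0}/4$, whence the claim. Your explicit three-way partition of $\nei{t}{1}{w}$ is just a more detailed justification of the degree bound that the paper asserts in one line; otherwise the arguments coincide.
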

\begin{proof}
Since $w$ is weakly tied to $\nei{t}{2}{u}$ and $\dg{t}{w}$, is at
most $|\nei{t}{1}{u}| + \dgi{t}{w}{\nei{t}{2}{u}}$, we obtain that
$\dg{t}{w}$ is at most $(1+1/4)\md{0} + \md{0}/2$.  Therefore, the
probability that $u$ connects to $v$ through $w$ in round $t$ is
\begin{eqnarray*}
= \frac{1}{\dg{t}{w}^2} \ge \frac{1}{\rb{(1+1/4)\md{0} + \md{0}/2}^2} 
\;\;\; \ge \;\;\; \frac{1}{\rb{7\md{0}/4}^2} 
\;\;\; \ge \;\;\; \frac{1}{4\md{0}^2}.
\end{eqnarray*}
\junk{
\begin{eqnarray*}
& & \prob{u\mbox{ connects to node in }v\mbox{ through }w\mbox{ in
    round }t} \\
&=& \frac{1}{\dg{t}{w}^2} \\
&\ge& \frac{1}{\rb{(1+1/4)\md{0} + \md{0}/2}^2} \\
&\ge& \frac{1}{\rb{7\md{0}/4}^2} \\
&\ge& \frac{1}{4\md{0}^2} 
\end{eqnarray*}
}
\end{proof}

\begin{figure}[ht]
\begin{center}
  \includegraphics[width=6.5in]{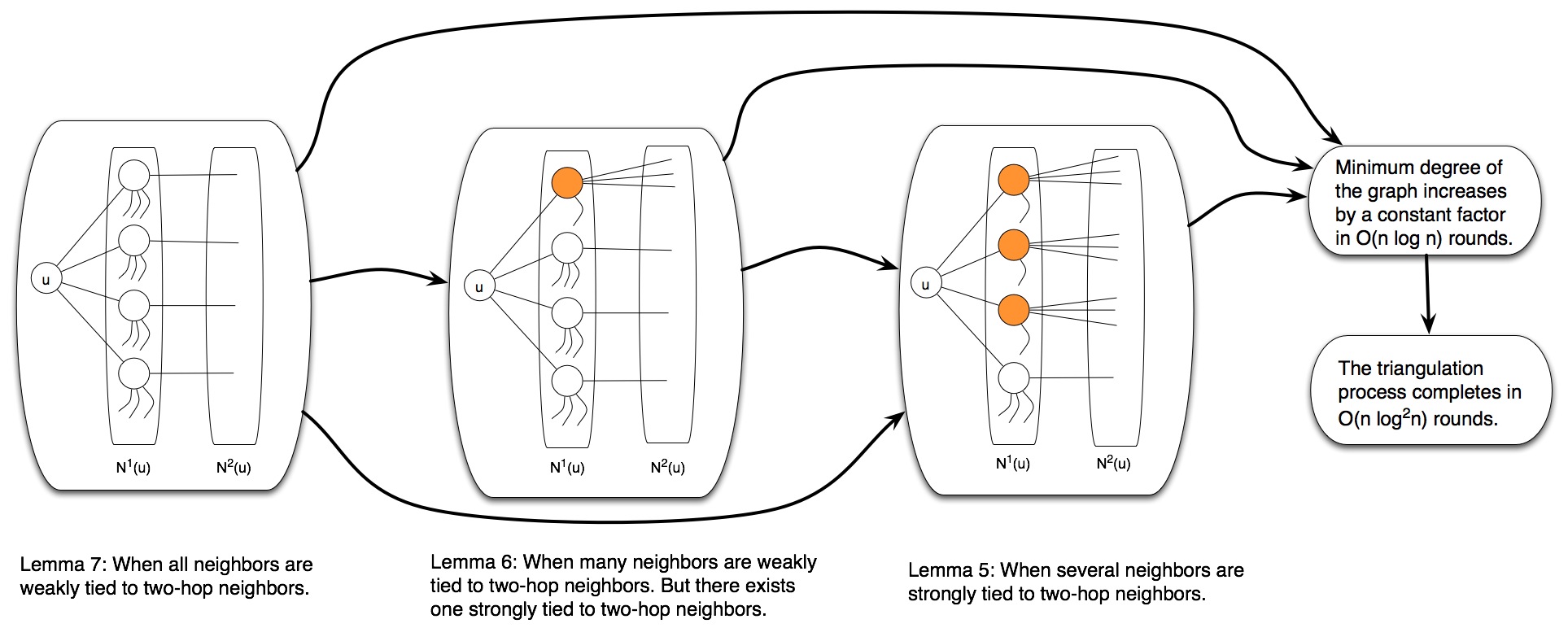}
  \caption{{\small This figure illustrates the different cases and
      relations between lemmas used in the proof of Theorem
      \ref{thm:triangulation-10p}. The shaded nodes in $\nei{t}{1}{u}$
      are strongly tied to $\nei{t}{2}{u}$. Others are weakly tied to
      $\nei{t}{2}{u}$.}}
  \label{fig:triangulation.proof-10p}
\end{center}
\end{figure}

For analyzing the growth in the degree of a node $u$, we consider
two overlapping cases.  The first case is when more than $\md{0}/4$
nodes of $\nei{t}{1}{u}$ are strongly tied to $\nei{t}{2}{u}$, and
the second is when less than $\md{0}/3$ nodes of $\nei{t}{1}{u}$
are strongly tied to $\nei{t}{2}{u}$.  The analysis for the first case
is relatively straightforward: when several neighbors of a node $u$
are strongly tied to $u$'s two-hop neighbors, then their triangulation
steps connect $u$ to a large fraction of these two-hop neighbors.

\begin{lemma}[{\bf When several neighbors are strongly tied to two-hop neighbors}]
\label{lem:triangle+case1-10p}
There exists $T = O(n \log n)$ such that if more than $\md{0}/4$ nodes
in $\nei{t}{1}{u}$ are strongly tied to $\nei{t}{2}{u}$ for all $t<
T$, then $\dg{T}{u} \ge (1+1/4)\md{0}$ with probability at least
$1-1/n^2$.
\end{lemma}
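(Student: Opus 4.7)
The plan is to show that whenever $\dg{t}{u} < (1+1/4)\md{0}$, node $u$ gains at least one new neighbor in round $t$ with probability $p = \Omega(\md{0}/n)$, and then to apply a concentration bound to conclude that $\md{0}/4$ fresh neighbors are collected within $T = O(n\log n)$ rounds. Since the triangulation process only adds edges, $\dg{t}{u}$ is monotone non-decreasing, so once it reaches $(1+1/4)\md{0}$ at any $t\le T$ the conclusion is locked in for round $T$.

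For the per-round lower bound, fix $t<T$ with $\dg{t}{u}<(1+1/4)\md{0}$ and apply Lemma~\ref{ob:strong-10p} to the $>\md{0}/4$ strongly-tied neighbors $w$ of $u$ supplied by the hypothesis. Their triangulation decisions within round $t$ are mutually independent, and each independently pushes $u$ to some $v\in\nei{t}{2}{u}$ (a non-neighbor of $u$, so a strict degree gain) with probability at least $2/(7n)$. Hence the probability that at least one such $w$ ``triggers'' is at least
\[
1-\rob{1-\tfrac{2}{7n}}^{\md{0}/4}\;\ge\;1-e^{-\md{0}/(14n)}\;\ge\;\frac{\md{0}}{28n}\;=:\;p,
\]
using $1-e^{-x}\ge x/2$ on $[0,1]$ and $\md{0}\le n$.

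Let $S_t$ be the indicator that round $t$ produces such a trigger. Conditioned on the history through round $t-1$, whenever $\dg{t}{u}<(1+1/4)\md{0}$ the variable $S_t$ stochastically dominates an independent $\mathrm{Bernoulli}(p)$, so $\sum_{t<T} S_t$ dominates $\mathrm{Bin}(T,p)$. Choose $T=cn\ln n$ with $c$ a sufficiently large constant so that $Tp\ge\max\{\md{0}/2,\,16\ln n\}$; a standard Chernoff bound then gives $\Pr\sqb{\sum_{t<T} S_t<\md{0}/4}\le 1/n^2$. Since each success round introduces at least one new neighbor of $u$, the event $\sum_{t<T} S_t\ge\md{0}/4$ forces $\dg{T}{u}\ge \md{0}+\md{0}/4=(1+1/4)\md{0}$.

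The step I expect to require the most care with is the coupling across rounds: both $\nei{t}{2}{u}$ and the strongly-tied subset of $\nei{t}{1}{u}$ evolve with $t$ and depend on the past randomness, so the $S_t$'s are not literally independent. The coupling goes through because the hypothesis is assumed for \emph{every} $t<T$ and Lemma~\ref{ob:strong-10p} applies verbatim to the state at the start of round $t$, yielding the same lower bound $p$ regardless of the history. One minor bookkeeping point is that Lemma~\ref{ob:strong-10p} is stated for $w\in\nei{0}{1}{u}$, whereas the hypothesis quantifies over $w\in\nei{t}{1}{u}\supseteq\nei{0}{1}{u}$; since the calculation in Lemma~\ref{ob:strong-10p} uses only $w$'s degree in $\gf{t}$ and $w$'s edges into $\nei{t}{2}{u}$, it extends to every $w\in\nei{t}{1}{u}$ with the same constants, so the argument is unaffected.
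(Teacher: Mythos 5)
Your proposal is correct and follows essentially the same route as the paper: both apply Lemma~\ref{ob:strong-10p} to the more than $\md{0}/4$ strongly tied neighbors to get an $\Omega(\md{0}/n)$ per-round chance of a fresh edge into $\nei{t}{2}{u}$, then conclude that $\md{0}/4$ new neighbors accumulate in $O(n\log n)$ rounds. The only (immaterial) difference is the final concentration step — the paper partitions the $T$ rounds into $\md{0}/4$ epochs of length $O(n\log n/\md{0})$ and union-bounds, while you use stochastic domination by a binomial plus a Chernoff bound — and your observation that Lemma~\ref{ob:strong-10p} extends verbatim from $\nei{0}{1}{u}$ to $\nei{t}{1}{u}$ is a point the paper implicitly relies on as well.
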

\begin{proof}
  If at any round $t< T$, $\dg{t}{u} \ge \rb{1+1/4}\md{0}$, then the
  claim of the lemma holds. In the remainder of this proof, we assume
  $\dg{t}{u} < \rb{1+1/4}\md{0}$ for all $t < T$. Let $w\in
  \nei{t}{1}{u}$ be a node that is strongly tied to
  $\nei{t}{2}{u}$. By Lemma \ref{ob:strong-10p} we know that
\[ \prob{u\mbox{ connects to a node in }\nei{t}{2}{u}\mbox{ through
  }w\mbox{ in round } t} \ge \frac{2}{7n} > \frac{1}{6n}\]
We have more than $\md{0}/4$ such $w$'s in $\nei{t}{1}{u}$, each of
which independently executes a triangulation step in any given round.
Consider a run of $T_1 = 72 n\ln n/\md{0}$ rounds.  This implies at
least $18 n \ln n$ attempts to add an edge between $u$ and a node in
$\nei{t}{2}{u}$.  Thus,
\begin{eqnarray*}
\prob{u\mbox{ connects to a node in }\nei{t}{2}{u}\mbox{ after
  }T_1\mbox{ rounds}} \ge 1-\rb{1-\frac{1}{6n}}^{18 n \ln n} \ge 1- e^{-3\ln n} = 1 - \frac{1}{n^3}.
\end{eqnarray*}
If a node that is two hops away from $u$ becomes a neighbor of $u$ by
round $t$, it is no longer in $\nei{t}{2}{u}$.  Therefore, in $T=T_1
\md{0}/4 = O(n\log n)$ rounds, $u$ will connect to at least $\md{0}/4$
new nodes with probability at least $1-1/n^2$, i.e., $\dg{T}{u}\ge
\rb{1+1/4}\md{0}$.
\end{proof}
We next consider the second case where less than $\md{0}/3$ neighbors
of a given node $u$ are strongly tied to the two-hop neighborhood of
$u$.  This case is more challenging since the neighbors of $u$ that
are weakly tied may not contribute many new edges to $u$.  We break
the analysis of this part into two subcases based on whether there is
at least one neighbor of $u$ that is strongly tied to
$\nei{0}{2}{u}$. Figure \ref{fig:triangulation.proof-10p} illustrates
the different cases and lemmas used in the proof of Theorem
\ref{thm:triangulation-10p}.

\begin{lemma}[{\bf When few neighbors are strongly tied to two-hop neighbors}]
\label{lem:triangle+extro-10p}
There exists $T=O(n\log n)$ such that if less than $\md{0}/3$ nodes
in $\nei{t}{1}{u}$ are strongly tied to $\nei{t}{2}{u}$ for all $t <
T$, and there exists a node $v_0\in \nei{0}{1}{u}$ that is strongly
tied to $\nei{0}{2}{u}$, then $\dg{T}{u} \ge \rb{1+1/8}\md{0}$ with
probability at least $1-1/n^2$.
\end{lemma}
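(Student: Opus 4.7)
The plan is to base the entire analysis on the triangulation steps of $v_0$ itself. If $\dg{t}{u} \ge (1+1/8)\md{0}$ ever occurs for $t < T$, the conclusion is immediate, so assume $\dg{t}{u} < (1+1/8)\md{0} \le (1+1/4)\md{0}$ throughout, which keeps the hypotheses of Lemmas~\ref{ob:strong-10p} and~\ref{ob:weak-10p} valid. Define $V := \nei{0}{2}{u} \cap \nei{0}{1}{v_0}$; the assumption that $v_0$ is strongly tied to $\nei{0}{2}{u}$ gives $|V| \ge \md{0}/2$. Because edges are never removed, every $v \in V$ remains a neighbor of $v_0 \in \nei{t}{1}{u}$ and hence always lies in $\nei{t}{1}{u} \cup \nei{t}{2}{u}$. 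Writing $V_t := V \cap \nei{t}{2}{u}$, as long as $u$ has accumulated fewer than $\md{0}/8$ new neighbors from $V$, we still have $|V_t| \ge 3\md{0}/8$.

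Next I classify $v_0$'s status at each round $t$ and lower-bound the per-round probability that $v_0$'s triangulation gives $u$ a new neighbor. When $v_0$ is weakly tied to $\nei{t}{2}{u}$, its degree is at most $(1+1/4)\md{0} + \md{0}/2 = 7\md{0}/4$, and Lemma~\ref{ob:weak-10p} applied with $w = v_0$ separately to each $v \in V_t$ (note $V \subseteq \nei{0}{2}{u} \cap \nei{0}{1}{v_0}$) bounds the probability that $v_0$'s triangulation creates the edge $\{u,v\}$ by at least $1/(4\md{0}^2)$. These events are mutually exclusive across different $v$'s since $v_0$ picks a single pair per round, so summing gives a per-round probability of at least $|V_t|/(4\md{0}^2) \ge 3/(32\md{0})$ that $u$ gains a new $V$-neighbor via $v_0$. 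When $v_0$ is strongly tied to $\nei{t}{2}{u}$, Lemma~\ref{ob:strong-10p} with $w = v_0$ yields per-round probability at least $2/(7n)$ that $v_0$'s triangulation adds an edge from $u$ to some node in $\nei{t}{2}{u}$.

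Combining the two regimes, in every round the probability that $u$ gains a new neighbor via $v_0$ is at least $p := \min\{2/(7n),\, 3/(32\md{0})\}$, regardless of how $v_0$'s status evolves, provided $u$ has not yet reached $\md{0}/8$ new $V$-neighbors. Setting $T = c n \log n$ for a sufficiently large constant $c$, I will apply a Chernoff-type concentration bound (or the coupon-collector-style Lemma~\ref{lem:couponcorollary-10p}, treating successive fresh $V$-neighbors as coupons whose acquisition probabilities are controlled by $|V_t|$) to conclude that within $T$ rounds $u$ collects at least $\md{0}/8$ new neighbors, except with probability at most $1/n^2$, which implies $\dg{T}{u} \ge (1+1/8)\md{0}$.

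The main obstacle is the intricate probabilistic coupling: $v_0$'s tie-status and $|V_t|$ both evolve stochastically as a function of past outcomes, so the successive triangulations do not form a sequence of independent Bernoulli trials and $v_0$ may flip between the two regimes. My plan for handling this is a stopping-time argument---define $\tau$ as the first round at which $u$ has $\md{0}/8$ new neighbors from $V$ and observe that for every $t < \tau$ the conditional probability of a success in round $t$ is at least $p$, so that the indicator sequence stochastically dominates a plain $\mathrm{Bin}(T,p)$ variable, against which standard tail bounds apply.
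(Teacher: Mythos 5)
There is a genuine quantitative gap: an argument that relies solely on $v_0$'s own triangulation steps cannot deliver $\md{0}/8$ new neighbors to $u$ in $O(n\log n)$ rounds. In your weakly-tied regime the per-round success probability is at most $|V_t|/\dg{t}{v_0}^2 = O(1/\md{0})$, and in your strongly-tied regime it is $O(1/n)$; in either case a single node contributes at most one new edge per round, so over $T = cn\log n$ rounds the number of successes is stochastically dominated by (and concentrates near) $Tp = O\rb{\frac{n\log n}{\md{0}}}$, respectively $O(\log n)$. To reach $\md{0}/8$ successes you would need $T = \Omega(\md{0}^2)$, respectively $T=\Omega(n\md{0})$, which is $\Omega(n^2)$ when $\md{0} = \Theta(n)$. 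The stopping-time/domination device is fine in principle, but it cannot repair this accounting; the bottleneck is that one triangulating node simply does not generate enough trials.

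This is exactly why the paper's proof is structured around \emph{parallelism} rather than around $v_0$ alone. For each target $v\in\nei{t}{2}{u}$ that is strongly tied to $\nei{t}{1}{u}$, the paper exploits the more than $\md{0}/6$ \emph{weakly tied} common neighbors $w\in\nei{t}{1}{u}$, each independently triangulating every round with per-pair probability $\ge 1/(4\md{0}^2)$ (Lemma~\ref{ob:weak-10p}); this yields per-round probability $\Omega(1/\md{0})$ \emph{per target}, and all $\md{0}/8$ such targets are acquired in parallel in $O(\md{0}\log n)$ rounds. For the remaining (weakly tied) targets, the paper does use $v_0$-style intermediaries, but it grows a collection of witnesses $v_1, v_2,\dots$, each newly acquired neighbor becoming an additional intermediary, so that the per-round probability of the $i$th success is $\Omega(i/n)$; only then does the coupon-collector bound (Lemma~\ref{lem:couponcorollary-10p}) give $O(n\log n)$ total. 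Your proposal would need to incorporate at least one of these two amplification mechanisms to close the gap.
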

\begin{proof}
If at any point $t < T$, $\dg{T}{u} \ge \rb{1+1/8}\md{0}$, then the
claim of the lemma holds. In the remainder of this proof, we assume
$\dg{T}{u} < \rb{1+1/8}\md{0}$ for all $t < T$.  Let $S^0_t$ denote
the set of $v_0$'s neighbors in $\nei{t}{2}{u}$ which are strongly
tied to $\nei{t}{1}{u}$ at round $t$, $W^0_t$ denote the set of $v_0$'s
neighbors in $\nei{t}{2}{u}$ which are weakly tied to $\nei{t}{1}{u}$
at round $t$.

Consider any node $v$ in $S^0_t$.  Less than $\md{0}/3$ nodes in
$\nei{t}{1}{u}$ are strongly tied to $\nei{t}{2}{u}$, thus more than
$\md{0}/2 - \md{0}/3 = \md{0}/6$ neighbors of $v$ in $\nei{t}{1}{u}$
are weakly tied to $\nei{t}{2}{u}$. Let $w$ be one such weakly tied
node. By Lemma \ref{ob:weak-10p}, the probability that $u$ connects to
$v$ through $w$ in round $t$ is at least $1/(4\md{0}^2)$.  We have at
least $\md{0}/6$ such $w$'s, each of which executes a triangulation
step each round.  Consider $T=72\md{0}\ln n$ rounds of the process.
Then the probability that $u$ connects to $v$ in $T$ rounds is at
least
\[
1-\rb{1-\frac{1}{4\md{0}^2}}^{12 \md{0}^2 \ln n} \ge 1-e^{-3\ln n} 
= 1- \frac{1}{n^3}.
\]
Thus, if $|S^0_t|\ge \md{0}/8$, in an additional $O(n\log n)$ rounds,
$\dg{T}{u}\ge (1+1/8)\md{0}$ with probability at least
$1-1/n^2$. 

Therefore, in the remainder of the proof we consider the case where
$|S^0_t| < \md{0}/8$. Define $R^0_t = R^0_{t-1} \cup W^0_t$, $R^0_0 =
W^0_0$. If at least $\md{0}/8$ nodes in $R^0_t$ are connected to
$u$ at any time, then the claim of the lemma holds. Thus, in the
following we consider the case where $|R^0_t \cap \nei{t}{1}{u}| <
\md{0}/8$. From the definition of $R^0_t$, we can derive
\[|R^0_t| \ge |W^0_t| = \dgi{t}{v_0}{\nei{t}{2}{u}} - |S^0_t| \ge
\dgi{t}{v_0}{\nei{t}{2}{u}}  - \md{0}/8 \]
At round 0, $v_0$ is strongly tied to $\nei{0}{2}{u}$,
i.e., $\dgi{0}{v_0}{\nei{0}{2}{u}} \ge \md{0}/2$. Since $\md{0}\le \dg{t}{u} <
(1+1/8)\md{0}$, we have 
\[ \dgi{t}{v_0}{\nei{t}{2}{u}} \ge \dgi{t}{v_0}{\nei{0}{2}{u}} - \md{0}/8
  \ge 3\md{0}/8\]

Let $e_1$ denote the event $\cub{u\mbox{ connects to a node in
  }R^0_t\setminus \nei{t}{1}{u}\mbox{ through }v_0\mbox{ in round }t}$.
\begin{eqnarray*}
\prob{e_1} &=& \frac{|R^0_t\setminus \nei{t}{1}{u}|}{\dg{t}{v_0}} \cdot
\frac{1}{\dg{t}{v_0}} 
\;\;\; = \;\;\; \frac{|R^0_t| - |R^0_t \cap \nei{t}{1}{u}|}{\dg{t}{v_0}} \cdot
\frac{1}{\dg{t}{v_0}} \\
& \ge & \frac{|R^0_t| - |R^0_t \cap \nei{t}{1}{u}|}{\dg{t}{v_0}} \cdot
\frac{1}{n} 
\;\;\;= \;\;\; \frac{|R^0_t| - |R^0_t \cap \nei{t}{1}{u}|}{\abs{\nei{t}{1}{u}}+\dgi{t}{v_0}{\nei{t}{2}{u}}} \cdot
\frac{1}{n} \\
& \ge & \frac{|R^0_t| - \md{0}/8}{\abs{\nei{t}{1}{u}}+\dgi{t}{v_0}{\nei{t}{2}{u}}} \cdot
\frac{1}{n} 
\;\;\; \ge \;\;\; \frac{\dgi{t}{v_0}{\nei{t}{2}{u}}  - \md{0}/8 - \md{0}/8}{\abs{\nei{t}{1}{u}}+\dgi{t}{v_0}{\nei{t}{2}{u}}} \cdot
\frac{1}{n} \\
&\ge& \frac{3\md{0}/8 - \md{0}/8 - \md{0}/8}{\abs{\nei{t}{1}{u}}+3\md{0}/8} \cdot
\frac{1}{n} 
\;\;\; \ge \;\;\;  \frac{3\md{0}/8 - \md{0}/8 - \md{0}/8}{(1+1/8)\md{0}+3\md{0}/8} \cdot
\frac{1}{n} 
\;\;\; = \;\;\;  \frac{1}{12n}
\end{eqnarray*}
Let $X_1$ be the number of rounds it takes for $e_1$ to occur. When
$e_1$ occurs, let $v_1$ denote a witness for $e_1$; i.e., if we use
$X_1$ to denote the round at which $e_1$ occurs, then let $v_1$ denote
a node in $R^0_{X_1} \setminus \nei{X_1}{1}{u}$ to which $u$ connects
through $v_0$ in round $X_1$.  Since $v_1$ is in $R^0_{X_1}$, it is
also in $W^0_{t_1}$ for some $t_1 \le X_1$; therefore, $v_1$ is
strongly tied to $\nei{t_1}{2}{u}\cap \nei{t_1}{3}{u}$. If
$\dgi{t}{v_1}{\nei{t}{2}{u}} < 3\md{0}/8$ at any point $t$, then
$\dg{t}{u}\ge (1+1/8)\md{0}$.  Thus, in the remainder of the proof, we
consider the case where $\dgi{t}{v_1}{\nei{t}{2}{u}}\ge
3\md{0}/8$. Let $S^1_t$ (resp., $W^1_t$) denote the set of $v_1$'s
neighbors in $\nei{t}{2}{u}$ that are strongly tied (resp., weakly
tied) to $\nei{t}{1}{u}$.  If $|S^1_t| \ge \md{0}/8$, then as we did
for the case $|S^0_t| \ge \md{0}/8$, we argue that in $O(n \log n)$
rounds, the degree of $u$ is at least $(1 + 1/8)\md{0}$ with
probability at least $1 - 1/n^2$.

Thus, in the remainder, we assume that $|S^1_t| < \md{0}/8$.  Define
$R^1_t = R^1_{t-1} \cup W^1_t$, $R^1_{t_1} = W^1_{t_1}$. Let $e_2$
denote the event $\cub{u\mbox{ connects to a node in }R^0_t\setminus
  \nei{t}{1}{u}(\mbox{or }R^1_t\setminus \nei{t}{1}{u})\mbox{ through
  }v_0(\mbox{or }v_1)\mbox{ in round }t}$. By the same calculation as
for $v_0$, we have $\prob{e_2} \ge 1/6n$.  Similarly, we can define
$e_3, X_3, e_4, X_4, \dots, e_{\md{0}/4}, X_{\md{0}/4}$, and obtain
that $\prob{e_i} \ge i/(12n)$. The total number of rounds for $u$ to
gain $\md{0}/4$ edges is bounded by $T=\sum_i X_i$.  By Lemma
\ref{lem:couponcorollary-10p}, $T\le 36n \ln n$ with probability at least
$1-1/n^2$, completing the proof.
\end{proof}

\begin{lemma}[{\bf When all neighbors are weakly tied to two-hop neighbors}]
\label{lem:triangle+case2-10p}
There exists $T = O(n\log n)$ such that if all nodes in
$\nei{t}{1}{u}$ are weakly tied to $\nei{t}{2}{u}$ for all $t< T$,
then $\dg{T}{u}\ge \min\cub{(1+1/8)\md{0}, n-1}$ with probability at
least $1-1/n^2$.
\end{lemma}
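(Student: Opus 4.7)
As in Lemmas~\ref{lem:triangle+case1-10p} and~\ref{lem:triangle+extro-10p}, I would dispose of the case $\dg{t}{u} \ge \min\{(1+1/8)\md{0},\,n-1\}$ (where the conclusion is immediate) and assume $\md{0} \le \dg{t}{u} < (1+1/8)\md{0}$ and $\dg{t}{u} < n-1$ for all $t < T$. Under the weakly tied hypothesis, every $w \in \nei{t}{1}{u}$ satisfies $\dgi{t}{w}{\nei{t}{2}{u}} < \md{0}/2$, and since $\dg{t}{w} = 1 + \dgi{t}{w}{\nei{t}{1}{u}\setminus\{w\}} + \dgi{t}{w}{\nei{t}{2}{u}}$, we obtain the uniform bound $\dg{t}{w} < (13/8)\md{0}$.

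The core step is to lower-bound the per-round probability that $u$ gains a new neighbor. Arguing as in Lemma~\ref{ob:strong-10p}, this probability is at least
\[
  \sum_{w \in \nei{t}{1}{u}} \frac{\dgi{t}{w}{\nei{t}{2}{u}}}{\dg{t}{w}^2} \;\ge\; \frac{64\,E_t}{169\,\md{0}^2},
\]
where $E_t = \sum_{w} \dgi{t}{w}{\nei{t}{2}{u}}$ is the number of edges between $\nei{t}{1}{u}$ and $\nei{t}{2}{u}$. Since every $v \in \nei{t}{2}{u}$ has at least one neighbor in $\nei{t}{1}{u}$, we have $E_t \ge |\nei{t}{2}{u}|$.

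Invoking Lemma~\ref{lem:moreneighbor-10p}, $|\cup_{i=1}^4 \nei{t}{i}{u}| \ge 2\md{0}$, so $|\nei{t}{2}{u}\cup\nei{t}{3}{u}\cup\nei{t}{4}{u}| \ge (7/8)\md{0}$. I would then split into two sub-cases. In the first, $|\nei{t}{2}{u}| \ge c\md{0}$ for some constant $c$: the per-round rate above is $\Omega(1/\md{0})$, and feeding this into Lemma~\ref{lem:couponcorollary-10p} in the same way as Lemmas~\ref{lem:triangle+case1-10p} and~\ref{lem:triangle+extro-10p} (i.e., partitioning $T = O(n\log n)$ rounds into $\md{0}/8$ blocks of length $O(n\log n/\md{0})$, each producing a new edge with probability $1-1/n^3$, then union bounding) yields $\dg{T}{u} \ge (1+1/8)\md{0}$ with probability at least $1-1/n^2$. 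In the second, $|\nei{t}{2}{u}|$ is small but $|\nei{t}{3}{u}\cup\nei{t}{4}{u}| = \Omega(\md{0})$: I would show that triangulations performed by vertices in $\nei{t}{2}{u}$ pull nodes from $\nei{t}{3}{u}\cup\nei{t}{4}{u}$ into $\nei{t}{2}{u}$ fast enough that the amortized value of $E_t$ over $T$ rounds is $\Omega(\md{0})$ times the per-round conversion rate, restoring a cumulative expectation of at least $\md{0}/8$ new edges.

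The main obstacle is this second sub-case: when $\nei{t}{2}{u}$ stays small the bound $E_t \ge |\nei{t}{2}{u}|$ is too weak on its own, and a separate argument is needed to quantify the rate at which the reservoir in $\nei{t}{3}{u}\cup\nei{t}{4}{u}$ guaranteed by Lemma~\ref{lem:moreneighbor-10p} is converted into $\nei{t}{1}{u}$--$\nei{t}{2}{u}$ edges by triangulations at $\nei{t}{2}{u}$ vertices. This is the new technical step absent from the earlier lemmas, in which a strongly tied neighbor provided the expansion directly and the rate analysis was one-shot rather than amortized.
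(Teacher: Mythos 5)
There is a genuine quantitative gap in your core step. Your per-round rate for $u$ gaining a new neighbor is $\Omega\rb{E_t/\md{0}^2}$ with only $E_t \ge |\nei{t}{2}{u}|$ available, and $u$ must gain $\md{0}/8$ new neighbors within $T=O(n\log n)$ rounds. In your first sub-case ($|\nei{t}{2}{u}| \ge c\md{0}$) the rate is only $\Omega(1/\md{0})$, so the expected number of new edges accumulated over $T$ rounds is $O(n\log n/\md{0})$, which is far below $\md{0}/8$ once $\md{0} \gg \sqrt{n\log n}$ (for $\md{0}=\Theta(n)$ your argument needs $\Theta(n^2)$ rounds, not $O(n\log n)$). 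The bound $E_t \ge |\nei{t}{2}{u}|$ cannot be improved in general: each two-hop neighbor may have exactly one edge into $\nei{t}{1}{u}$. Your second sub-case is explicitly deferred, and even its stated goal (amortized $E_t = \Omega(\md{0})$) would leave the same deficit. The block-partition into $\md{0}/8$ blocks of length $O(n\log n/\md{0})$ therefore does not produce one new edge per block with high probability, and the coupon-collector invocation fails.

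The missing idea is a densification phase carried out per target vertex, followed by a union bound, rather than counting $u$'s new edges one at a time. The paper fixes a single $v\in\nei{0}{2}{u}$ and first shows that $v$ acquires $\md{0}/4$ neighbors inside $\nei{t}{1}{u}$ within $O(n\log n)$ rounds with probability $1-1/n^3$: since every $w\in\nei{t}{1}{u}$ is weakly tied to $\nei{t}{2}{u}$, it has at least $\md{0}/2$ edges within $\nei{t}{1}{u}$, so each witness $w_i\in\nei{t}{1}{u}\cap\nei{t}{1}{v}$ triangulates $v$ to a fresh member of $\nei{t}{1}{u}$ with probability at least $i/(7n)$, and Lemma~\ref{lem:couponcorollary-10p} applies with the accelerating rates. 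Only then does one invoke Lemma~\ref{ob:weak-10p}: with $\md{0}/4$ common low-degree neighbors, the specific edge $(u,v)$ is added at rate $\md{0}/4 \cdot 1/(4\md{0}^2)$ per round, hence within $O(\md{0}\log n)=O(n\log n)$ further rounds with probability $1-1/n^3$. A union bound over all $v\in\nei{0}{2}{u}$ brings the entire two-hop neighborhood into $\nei{}{1}{u}$ in $O(n\log n)$ rounds, and iterating this three times together with Lemma~\ref{lem:moreneighbor-10p} yields $\min\cub{2\md{0},n-1}$ neighbors. Your instinct to reach into $\nei{t}{3}{u}\cup\nei{t}{4}{u}$ via Lemma~\ref{lem:moreneighbor-10p} is the right one, but without the per-$v$ densification the rate analysis does not close.
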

\begin{proof}
  If at any point $t < T$, $\dg{t}{u}\ge \min\cub{(1+1/8)\md{0},
    n-1}$, then the claim of this lemma holds.  In the remainder of
  this proof, we assume $\dg{t}{u} < \min\cub{(1+1/8)\md{0}, n-1}$ for
  all $t < T$.  \junk{ When there exists an strongly tied node in
    $\nei{t}{1}{u}$, by Lemma \ref{lem:triangle+extro-10p}, we know in
     $T=O(n\log n)$ rounds, $\dg{T}{u}\ge (1+1/8)\md{0}$, which proves
    this lemma. Thus, we now consider the case where all nodes in
    $\nei{0}{1}{u}$ are weakly tied to $\nei{t}{2}{u}$ for $t < T$.  }
  In the following, we first show, any node $v\in \nei{0}{2}{u}$
  will have at least $\md{0}/4$ edges to $\nei{T_1}{1}{u}$, where $T_1
  = O(n\log n)$. After that, $v$ will connect to $u$ in $T_2 = O(n\log
  n)$ rounds. Therefore, the total number of rounds used for $v$ to
  connect to $u$ is $T_3 = T_1 + T_2 = O(n\log n)$.

Node $v$ at least connects to one node in $\nei{0}{1}{u}$. Call it
$w_1$. Because all nodes in $\nei{t}{1}{u}$ are weakly tied to
$\nei{t}{2}{u}$, we have $\dgi{t}{w_1}{\nei{t}{1}{u}} \ge \md{0} -
\md{0}/2 = \md{0}/2$. If $\dgi{t}{w_1}{\nei{t}{1}{u}\setminus
  \nei{t}{1}{v}} < \md{0}/4$, then $v$ already has $\md{0}/4$ edges to
$\nei{t}{1}{u}$. Thus, in the following we consider the case where
$\dgi{t}{w_1}{\nei{t}{1}{u}\setminus \nei{t}{1}{v}} \ge \md{0}/4$. Let
$e_1$ denote the event $\cub{v\mbox{ connects to a node in
  }\nei{t}{1}{u}\setminus \nei{t}{1}{v}\mbox{ through }w_1}$.
\begin{eqnarray*}
\prob{e_1} &=& \frac{\dgi{t}{w_1}{\nei{t}{1}{u}\setminus\nei{t}{1}{v}}}{\dg{t}{w_1}} \cdot
\frac{1}{\dg{t}{w_1}} 
\;\;\; \ge \;\;\; \frac{\dgi{t}{w_1}{\nei{t}{1}{u}\setminus \nei{t}{1}{v}}}{\abs{\nei{t}{1}{u}} +
  \dgi{t}{w_1}{\nei{t}{2}{u}}} \cdot \frac{1}{\dg{t}{w_1}} \\
&\ge& \frac{\md{0}/4}{(1+1/8)\md{0} + \md{0}/2} \cdot
\frac{1}{\dg{t}{w_1}} 
\;\;\; \ge \;\;\; \frac{2}{13} \cdot \frac{1}{n} 
\;\;\; > \;\;\; \frac{1}{7n}
\end{eqnarray*}
Let $X_1$ be the number of rounds needed for $e_1$ to occur. When
$e_1$ occurs, let $w_2$ denote a witness for $e_1$; i.e., let $w_2$
denote a vertex in $\nei{t}{1}{u} \setminus \nei{t}{1}{v}$ to which
$v$ connects.  Note that here the value of $t$ is the round at which
the event occurs.  By our choice, $w_2$ is also weakly tied to
$\nei{t}{2}{u}$.  By an argument similar to the one in the above
paragraph, we have $\dgi{t}{w_2}{\nei{t}{1}{u}\setminus \nei{t}{1}{v}}
\ge \md{0}/4$. Let $e_2$ denote the event $\cub{v\mbox{ connects to a
    node in }\nei{t}{1}{u}\mbox{ through }w_1\mbox{ or }w_2}$. We have
$\prob{e_2} \ge 2/(7n)$.  Let $X_2$ be the number of rounds needed for
$e_2$ to occur. Similarly, we can define $e_3, X_3, \dots,
e_{\md{0}/4}, X_{\md{0}/4}$ and show $\prob{e_i} \ge i/(7n)$. Set $T_1
= \sum_i X_i$, which is the bound on the number of rounds needed for
$v$ to have at least $\md{0}/4$ neighbors in $\nei{t}{1}{u}$. By Lemma
\ref{lem:couponcorollary-10p}, $T_2\le 28n\ln n$ with probability at
least $1-1/n^3$. Now we show $v$ will connect to $u$ in $T_2$ rounds
after this. Notice that, all $w_i$'s are still weakly tied to
$\nei{t}{2}{u}$. By Lemma \ref{ob:weak-10p}, the probability that $u$
connects to $v$ through $w_i$ in round $t$ is at least
$1/(4\md{0}^2)$.  We have $w_1,w_2,\dots,w_{\md{0}/4}$ independently
executing a triangulation step each round. Consider $T_2 = 48\md{0}\ln
n$ rounds of the process. Then,
\begin{eqnarray*}
\prob{u\mbox{ connects to }v\mbox{ in }T_2\mbox{ rounds}} 
\ge 1-\rb{1-\frac{1}{4\md{0}^2}}^{12\md{0}^2\ln n} 
\ge 1-\frac{1}{n^3}.
\end{eqnarray*}
We have shown for any node $v\in \nei{0}{2}{u}$, it will connect to
$u$ in round $T_3=T_1+T_2$ with probability at least $1-1/n^3$. This
implies in round $T_3$, $u$ will connect to all nodes in
$\nei{0}{2}{u}$ with probability at least
$1-\abs{\nei{0}{2}{u}}/n^3$. Then, $\nei{0}{2}{u} \subseteq
\nei{T_3}{1}{u}, \nei{0}{3}{u} \subseteq \nei{T_3}{1}{u}\cup
\nei{T_3}{2}{u}, \nei{0}{4}{u} \subseteq \nei{T_3}{1}{u}\cup
\nei{T_3}{2}{u}\cup \nei{T_3}{3}{u}$.  We apply the above analysis
twice, and obtain that in round $T=3T_3=O(n\log n)$,
$\nei{0}{2}{u}\cup \nei{0}{3}{u}\cup \nei{0}{4}{u} \subseteq
\nei{T}{1}{u}$ with probability at least $1-\abs{\nei{0}{2}{u}\cup
  \nei{0}{3}{u}\cup \nei{0}{4}{u}}/n^3\ge 1-1/n^2$. By Lemma
\ref{lem:moreneighbor-10p}, $\abs{\nei{0}{2}{u}\cup \nei{0}{3}{u}\cup
  \nei{0}{4}{u}} \ge\min\cub{2\md{0}, n-1}$, thus completing the
proof.
\end{proof}

\begin{theorem}[{\bf Upper bound for triangulation process}]
\label{thm:triangulation-10p}
  For any connected undirected graph, the triangulation process
  converges to a complete graph in $O(n\log^2 n)$ rounds with high
  probability.
\end{theorem}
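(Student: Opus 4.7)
The plan is a phase-based amplification argument. Define a \emph{phase} to consist of $T = O(n \log n)$ consecutive rounds and show that during any single phase the minimum degree $\md{}$ of the graph increases by at least a factor of $9/8$, or else reaches $n-1$, with probability $\ge 1 - 1/n$. Since edges are only added, degrees are monotone non-decreasing, so after $k = O(\log_{9/8} n) = O(\log n)$ phases the minimum degree reaches $n-1$, at which point the graph is complete. Multiplying the phase length by the number of phases gives the claimed $O(n \log^2 n)$ bound, and a union bound over the $O(\log n)$ phases (setting each per-node failure probability to $1/n^{c+2}$ and applying the union bound over $n$ nodes) yields the high-probability guarantee.

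To bound one phase, fix its starting round as round $0$ and let $\md{0}$ denote the minimum degree at that point. Fix a node $u$. If $\dg{t}{u} \ge (1+1/8)\md{0}$ already occurs within the phase we are done for $u$, so assume $\md{0} \le \dg{t}{u} < (1+1/8)\md{0}$ throughout. Now partition the execution of the phase via the following case analysis on the number of neighbors of $u$ that are strongly tied to $\nei{t}{2}{u}$:
\begin{enumerate}
  \item If more than $\md{0}/4$ neighbors of $u$ in $\nei{t}{1}{u}$ are strongly tied to $\nei{t}{2}{u}$ at every round $t$ of the phase, Lemma~\ref{lem:triangle+case1-10p} supplies $\dg{T}{u} \ge (1+1/4)\md{0}$ with probability $\ge 1 - 1/n^2$.
  \item Otherwise, the strongly-tied count drops to at most $\md{0}/4 < \md{0}/3$ at some point. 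Because the three lemma hypotheses overlap in the interval $(\md{0}/4, \md{0}/3)$, one can select a subphase of length $\Theta(n \log n)$ within the phase on which either the count stays strictly below $\md{0}/3$ throughout, or all neighbors are weakly tied throughout. In the former case we further split based on whether some $v_0 \in \nei{0}{1}{u}$ is strongly tied to $\nei{0}{2}{u}$ at the start of the subphase: if yes, Lemma~\ref{lem:triangle+extro-10p} applies; if no, Lemma~\ref{lem:triangle+case2-10p} applies directly.
\end{enumerate}
In every case, $\dg{T}{u} \ge \min\{(1+1/8)\md{0}, n-1\}$ with probability $\ge 1 - 1/n^2$. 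Taking a union bound over the $n$ nodes yields $\md{T} \ge \min\{(9/8)\md{0}, n-1\}$ for the phase with probability $\ge 1 - 1/n$.

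The main obstacle is verifying that the three lemma hypotheses truly tile the space of possible executions during a phase: each lemma requires its condition to hold at \emph{every} round, yet the strongly-tied count may fluctuate across the $\md{0}/4$ and $\md{0}/3$ thresholds as the graph changes. The deliberate overlap between Lemmas~\ref{lem:triangle+case1-10p} and~\ref{lem:triangle+extro-10p} is what permits extraction of a long enough contiguous window on which one hypothesis is satisfied; shrinking $T$ by a constant factor to accommodate this window extraction costs nothing asymptotically. Once this case analysis is in place, the degree-doubling per phase combined with iteration over $O(\log n)$ phases finishes the proof.
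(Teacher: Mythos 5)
Your overall architecture matches the paper's: a phase of $T=O(n\log n)$ rounds in which the minimum degree grows by a constant factor (or hits $n-1$), driven by a case split on how many neighbors of $u$ are strongly tied to $\nei{t}{2}{u}$, invoking Lemmas~\ref{lem:triangle+case1-10p}, \ref{lem:triangle+extro-10p} and \ref{lem:triangle+case2-10p}, followed by a union bound and $O(\log n)$ iterations. However, the step you yourself flag as ``the main obstacle'' is resolved incorrectly. You claim that because the lemma hypotheses overlap on the interval $(\md{0}/4,\md{0}/3)$, one can always extract a \emph{contiguous} subphase of length $\Theta(n\log n)$ on which a single lemma's hypothesis holds at every round. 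This is false in general: the strongly-tied count can oscillate from round to round (it decreases when two-hop neighbors of $u$ become one-hop neighbors, and increases when new edges are added into $\nei{t}{2}{u}$ or new nodes enter $\nei{t}{1}{u}$), so it could sit below $\md{0}/4$ at some rounds and above $\md{0}/3$ at others in rapid alternation, leaving no long window on which either hypothesis holds throughout. The overlap of the thresholds guarantees that at each \emph{single} round at least one hypothesis is satisfied, which is not enough to feed any of the lemmas, since each requires its condition for \emph{all} $t<T$.

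The paper closes this gap by a different mechanism, which is the idea missing from your proposal: a violation of a lemma's hypothesis is itself a certificate of progress. Concretely, if one starts with more than $\md{0}/3$ strongly tied neighbors and at some round fewer than $\md{0}/4$ remain strongly tied, then more than $\md{0}/12$ neighbors transitioned from strongly to weakly tied; since edges are never deleted, a node's edge count into $\nei{t}{2}{u}$ can only drop because two-hop neighbors of $u$ moved into $\nei{t}{1}{u}$, i.e., because $\dg{t}{u}$ grew, and the paper concludes $\dg{T}{u}\ge(1+1/12)\md{0}$ directly in that event (this is also why the paper's final growth factor is $1/12$ rather than your $1/8$). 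Symmetrically, if one starts in the few-strongly-tied case and the count later exceeds $\md{0}/3$, the analysis hands off to the first case. Your proof needs this ``hypothesis failure implies degree gain'' observation (or some substitute for it); without it the case analysis does not tile the executions of a phase, and the rest of the argument, which is otherwise sound, does not go through. A trivial further slip: the per-phase growth is a factor of $9/8$ (or $13/12$), not ``degree-doubling'' as stated in your closing sentence.
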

\begin{proof}
  We first show that in $O(n\log n)$ rounds, either the graph becomes
  complete or its minimum degree increases by a factor of at least
  $1/12$. Then we apply this argument $O(\log n)$ times to complete
  the proof.

  For each $u$ where $\dg{0}{u} <\min\cub{(1+1/8)\md{0}, n-1}$, we
  consider the following 2 cases.  The first case is if more than
  $\md{0}/3$ nodes in $\nei{0}{1}{u}$ are strongly tied to
  $\nei{0}{2}{u}$.  By Lemma \ref{lem:triangle+case1-10p}, there exists
  $T=O(n\log n)$ such that if at least $\md{0}/4$ nodes in
  $\nei{t}{1}{u}$ are strongly tied to $\nei{t}{2}{u}$ for $t<T$, then
  $\dg{T}{u}\ge (1+1/8)\md{0}$ with probability at least $1-1/n^2$.
  Whenever the condition is not satisfied, i.e., less than $\md{0}/4$
  nodes in $\nei{t}{1}{u}$ are strongly tied to $\nei{t}{2}{u}$, it
  means more than $\md{0}/3-\md{0}/4=\md{0}/12$ strongly tied nodes
  became weakly tied. By the definitions of strongly tied and weakly
  tied, this implies $\dg{T}{u} \ge (1+1/12)\md{0}$.

The second case is if less than $\md{0}/3$ nodes in $\nei{0}{1}{u}$
are strongly tied to $\nei{0}{2}{u}$.  By
Lemmas~\ref{lem:triangle+extro-10p} and~\ref{lem:triangle+case2-10p}, we know
that there exists $T=O(n\log n)$ such that if we remain in this case
for $T$ rounds, then $\dg{T}{u}\ge \min\cub{ (1+1/8)\md{0}, n-1}$ with
probability at least $1-1/n^2$.  Whenever the condition is not
satisfied, i.e., more than $\md{0}/3$ nodes in $\nei{t}{1}{u}$ are
strongly tied to $\nei{t}{2}{u}$, we move to the analysis in the first
case, and $\dg{T}{u}\ge (1+1/8)\md{0}$ in $T=O(n\log n)$ rounds with
probability at least $1-1/n^2$.

Combining the above 2 cases and applying a union bound, we obtain
$\md{T} \ge \min\cub{(1+1/8)\md{0},n-1}$ in $T=O(n\log n)$ rounds with
probability at least $1-1/n$.  We now apply the above argument $O(\log
n)$ times to obtain the desired $O(n\log^2 n)$ upper bound.
\end{proof}

\junk{
\begin{lemma}
[{\bf Lower bound for triangulation process}]
 For any connected undirected graph $G$ with minimum degree $\alpha
  n$ for some constant $\alpha$, with high probability the
  triangulation process takes $\Omega(n\log n)$ to complete when the
  number of missing edges (compared with complete graph) is at least
  $n$.
\end{lemma}
\begin{proof}
  For any missing edge $(u,v)$, look at a common neighbor of $u$ and
  $v$, referred as $w$. Then node $w$ connecting $u$ and $v$ in a
  triangulation step is a Bernoulli experiment. Because the minimum
  degree of $G$ is $\alpha n$, the probability of a success outcome is
  at most $1/ \rb{\alpha n}^2$. Let $k$ be the number of missing
  edges. In the following we argue there has to be $\Omega(n^2\ln n)$
  such Bernoulli trials before all $k$ missing edges to be added. The
  triangulation process can only carry out at most $n$ such Bernoulli
  trials in each step, thus we complete the proof of $\Omega(n\ln n)$
  lower bound. 

  Let random variable $X_i$ denote the number of trials for the $i$th
  missing edge to be added. Let $T$ be $\beta n^2\ln n$, where $\beta$
  is a constant to be specified later.
\begin{eqnarray*}
\prob{X_1\le T, X_2\le T, \dots, X_k \le T} &=& \sqb{1-\rb{1-\frac{1}{\rb{\alpha n}^2}}^T}^k \\
&=& \sqb{1-\rb{1-\frac{1}{\rb{\alpha n}^2}}^{\beta n^2\ln n}}^k \\
&\rightarrow& \rb{1-e^{-\frac{\beta}{\alpha^2} \ln n}}^k \\
&=& \rb{1-\frac{1}{n^{\beta / \alpha^2}}}^k \\
&\le& \rb{1-\frac{1}{n^{\beta / \alpha^2}}}^n
\end{eqnarray*}
Set $\beta = \alpha^2/2$. Then,
\[\prob{X_1\le T, X_2\le T, \dots, X_k \le T} \le
\rb{1-\frac{1}{\sqrt{n}}}^n \le e^{-\sqrt{n}} \rightarrow 0\]
This completes the proof of this lemma.
\end{proof}
}

\subsection{Lower bound}
The proof of the following theorem is in
Appendix~\ref{app:triangulation}.

\begin{theorem}
[{\bf Lower bound for triangulation process}]
\label{thm:tri+lower-10p}
For any connected undirected graph $G$ that has $k \ge 1$ edges less
than the complete graph the triangulation process takes $\Omega(n \log
k)$ steps to complete with probability at least $1 - O\rb{e^{-k^{1/4}}}$.
\end{theorem}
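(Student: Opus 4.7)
The plan is to show that a non-negligible number of the $k$ initially missing edges survive for $T = \Theta(n\log k)$ rounds, and then invoke a concentration argument to conclude that with probability $1 - O(e^{-k^{1/4}})$ the process has not yet completed. At each round $t$ I would bound the probability $p_t(u,v)$ that a fixed missing edge $(u,v)$ gets added: the edge can only appear if some common neighbor $w \in \nei{t}{1}{u} \cap \nei{t}{1}{v}$ chooses the pair $\{u,v\}$ in its triangulation step, which happens with probability at most $2/(\dg{t}{w}(\dg{t}{w}-1))$. Summing over common neighbors and using $\dg{t}{w} \le n-1$, together with the aggregate bound $\sum_{(u,v)\text{ missing}} p_t(u,v) \le n$ (since each node proposes at most one edge per round), I expect to isolate a sub-family of $\Omega(k)$ ``hard'' missing edges for which $p_t(u,v) = O(1/n)$ holds throughout the process.

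For each hard edge, the survival probability after $T = cn\log k$ rounds, with $c$ a sufficiently small constant, is at least $(1-O(1/n))^T \ge k^{-1/4}$. Letting $Y$ count the hard missing edges still absent at time $T$, this gives $\expect{Y} = \Omega(k^{3/4})$. To upgrade this to the high-probability statement $Y \ge 1$, I would use that the indicators ``edge $i$ is still missing at time $T$'' are negatively correlated: adding any edge only raises degrees and thus only lowers the per-round addition probability of every other missing edge. A Chernoff-style lower-tail bound for negatively correlated indicators then yields $\prob{Y = 0} \le \exp(-\Omega(\expect{Y})) = O(e^{-k^{1/4}})$, with plenty of room in the exponent.

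The main obstacle is maintaining the $O(1/n)$ bound on $p_t(u,v)$ for all $t < T$ simultaneously, given that the graph co-evolves with the process. A particular worry is a low-degree common neighbor in the initial graph that can trigger a specific missing edge in a single round with constant probability (a degree-$2$ common neighbor $w$ of $u,v$ even forces the edge on the first round). I plan to handle this in two stages: first, apply the aggregate bound $\sum p_t(u,v) \le n$ round by round to show that only a small fraction of the missing edges are ever ``easy'' to add, thereby defining the hard sub-family of size $\Omega(k)$; and second, couple to a monotone process in which each $\dg{t}{w}$ only grows, using that $2/(\dg{t}{w}(\dg{t}{w}-1))$ is decreasing in $\dg{t}{w}$ so that once an edge is hard it stays hard. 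Making the negative-correlation step fully rigorous in the presence of the graph's complicated dependencies is the second technical hurdle and is likely to require a careful coupling or martingale construction.
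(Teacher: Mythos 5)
There are two genuine gaps here, and both are points where the paper's actual proof takes a different (and simpler) route. First, your isolation of an $\Omega(k)$-size family of ``hard'' missing edges with $p_t(u,v) = O(1/n)$ does not follow from the aggregate bound $\sum_{(u,v)} p_t(u,v) \le n$: that bound only shows that at most half the missing edges have $p_t(u,v) > 2n/k$, and $2n/k = O(1/n)$ only when $k = \Omega(n^2)$, so for, say, $k = n$ the argument gives nothing. Your monotonicity rescue (``once hard, always hard'') also fails, because although each $\dg{t}{w}$ only grows, the \emph{set} of common neighbors of $u$ and $v$ also grows, adding new positive terms to $p_t(u,v)$; a pair with one high-degree common neighbor at time $0$ can later acquire a low-degree common neighbor and become easy. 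The paper avoids all of this with one observation you are missing: run the process until the first time $t$ at which the minimum degree reaches $n/3$. Since a node's degree can at most double per round, the minimum degree at that moment is at most $2n/3$, so at least $n/3 = \Omega(\sqrt{k})$ edges are still missing then, and from that time on \emph{every} missing edge has per-round addition probability at most $O(1/n)$ deterministically (each of the at most $n-2$ potential common neighbors contributes at most $O(1/n^2)$). This yields $m = \Omega(\sqrt{k})$ surviving candidates with a uniform worst-case bound, with no need to define a hard family.

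Second, the negative-correlation claim is unsupported and likely false: adding an edge can create new common neighbors for another missing pair and thereby \emph{increase} its addition probability, so the survival indicators can be positively correlated. You do not need it. Because the $O(1/n)$ per-round bound holds pointwise for any history after time $t$, one can write $\prob{X_1\le T,\dots,X_m\le T}$ as a product of conditional probabilities and bound each factor by $1-(1-O(1/n))^{T} \le 1 - 1/\sqrt{m}$ for $T = \Theta(n\log m)$, giving $\rb{1-1/\sqrt{m}}^{m} = O\rb{e^{-\sqrt{m}}} = O\rb{e^{-k^{1/4}}}$. This is exactly the exponent in the theorem statement; your hoped-for $e^{-\Omega(k^{3/4})}$ is stronger than what is claimed and stronger than what this route can deliver, which is a hint that the intended argument works with only $\Omega(\sqrt{k})$ surviving edges, not $\Omega(k)$.
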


\junk{
\begin{proof}
  We first observe that during the triangulation process there is a 
  time $t$ when the number of missing edges is at least $m = O(\sqrt{k})$
  and the minimum degree is at least $n/3$. If $k<\frac{2}{3} n$ then 
  this is true initially and for larger $k$ this is true at the first 
  time $t$ the minimum degree is large enough. The second case follows since
  the degree of a node (and thus also the minimum degree) can at most
  double in each step guaranteeing that the minimum degree is not larger
  than $\frac{2}{3}n$ at time $t$ also implying that at least $\frac{n}{3} = \Omega(\sqrt{k})$
  edges are still missing. 
  
  Given the bound on the minimum degree any missing edge $\{u,v\}$ is added 
  by a fixed node $w$ with probability at most $\frac{9}{2n^2}$.
  Since there are at most $n-2$ such nodes the probability that a missing edge
  gets added is at most $\frac{9}{2n}$. To analyze the time needed for all missing
  edges to be added we denote with $X_i$ the random variable
  counting the number of steps needed until the $i$th of the $m$ missing edges is added.
  We would like to analyze $\prob{X_1\le T, X_2\le T, \dots, X_m \le T}$ for an 
  appropriately chosen number of steps $T$. Note that the
  events $X_i < T$ and $X_j <T$ are not independent and indeed can be positively
  or negatively correlated. Nevertheless, independent of the conditioning onto 
  any of the events $X_j < T$, we have that $\prob{X_1 \le T} \leq 1 - (1 - \frac{9}{2n})^T \leq 1 - \frac{1}{\sqrt{m}}$ for an appropriately chosen $T = \Omega(n \log m)$, where $m$ is again the number of missing edges at time $t$. Thus,
$$\prob{X_1\le T, X_2\le T, \dots, X_m \le T} =$$
$$= \prob{X_1\le T| X_2\le T, \dots, X_m \le T} \cdot \prob{X_2 \le T | X_3, \dots, X_m \le T} \cdot \ldots \cdot \prob{X_m \le T} $$
$$\leq \rb{1-\frac{1}{\sqrt{m}}}^m \; = \; O\rb{e^{-\sqrt{m}}} \; = \; O\rb{e^{-k^{1/4}}}$$

This shows that the triangulation process takes with probability at least $1-O\rb{e^{-k^{1/4}}}$ at least $\Omega(n \log m) = O(n \log k)$ steps to complete. 
\end{proof}
}

\section{The two-hop walk: Discovery through pull}
\label{sec:2hop-10p}
In this section, we analyze the two-hop walk process on undirected
connected graphs, which is described by the following simple
iteration: In each round, for each node $u$, we add edge $(u,w)$
where $w$ is drawn uniformly at random from $\nei{t}{1}{v}$, where $v$
is drawn uniformly at random from $\nei{t}{1}{u}$.  The two-hop walk
yields the following pull-based resource discovery protocol.  In each
round, each node $u$ contacts a random neighbor $v$, receives the
identity of a random neighbor $w$ of $v$, and sends its identity to
$w$.  The main result of this section is that the two-hop walk process
transforms an arbitrary connected $n$-node graph to a complete graph
in $O(n \log^2 n)$ rounds with high probability.  We also establish an
$\Omega(n\log n)$ lower bound on the two-hop walk for almost all
$n$-node graphs.

As for the triangulation process, we establish the $O(n \log^2 n)$
upper bound by showing that the minimum degree of the graph increases
by a constant factor (or equals $n-1$) in $O(n \log n)$ rounds with
high probability.  For analyzing the growth in the degree of a node
$u$, we consider two overlapping cases.  The first case is when the
two-hop neighborhood of $u$ is not too large, i.e., $|\nei{t}{2}{u}| <
\md{0}/2$, and the second is when the two-hop neighborhood of $u$ is
not too small, i.e., $|\nei{t}{2}{u}| \ge \md{0}/4$.  The proofs of
the following three claims that establish the upper bound are deferred
to Appendix~\ref{app:2hop}.

\begin{lemma}[{\bf When the two-hop neighborhood is not too large}]
\label{lem:mingrowth-1-10p}
There exists $T=O(n\log n)$ such that either $\abs{\nei{T}{2}{u}} \ge
\md{0}/2$ or $\dg{T}{u} \ge \min\cub{2\md{0},n-1}$ with probability at
least $1-1/n^2$.
\end{lemma}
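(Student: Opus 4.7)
I would fix a node $u$ and proceed by conditioning on the ``bad'' event that $|\nei{t}{2}{u}| < \md{0}/2$ and $\dg{t}{u} < \min\cub{2\md{0}, n-1}$ for all $t < T$, aiming to show this joint event has probability at most $1/n^2$ for a suitable $T = O(n \log n)$. The essential structural observation under this conditioning is that every $v \in \nei{t}{1}{u}$ has its neighborhood confined to $\cub{u} \cup \nei{t}{1}{u} \cup \nei{t}{2}{u}$, so $\dg{t}{v} \le 1 + \dg{t}{u} + |\nei{t}{2}{u}| < 3\md{0}$. This uniform bound on the degrees of $u$'s neighbors is what will convert walk probabilities into clean $\Omega(1/\md{0})$ or $\Omega(1/n)$ expressions.

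I would then split into two subcases based on whether $\nei{t}{3}{u}$ is empty. In the first subcase (which is stable once established, since distances can only shrink as edges are added), the entire graph lies inside the 2-neighborhood of $u$, forcing $\md{0} = \Omega(n)$. Moreover, each $w \in \nei{t}{2}{u}$ has all its neighbors in $\nei{t}{1}{u} \cup \nei{t}{2}{u}$, so $\dgi{t}{w}{\nei{t}{1}{u}} \ge \md{0} - |\nei{t}{2}{u}| > \md{0}/2$, giving the edge count bound $|E(\nei{t}{1}{u}, \nei{t}{2}{u})| \ge |\nei{t}{2}{u}|\md{0}/2$. Combining this with the $3\md{0}$ degree bound on $v \in \nei{t}{1}{u}$ yields that the probability $p_u$ that $u$'s two-hop walk lands in $\nei{t}{2}{u}$ (and hence gives $u$ a new neighbor) is $\Omega(|\nei{t}{2}{u}|/n)$. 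Since $\nei{t}{3}{u}$ stays empty, $\nei{t}{2}{u}$ cannot grow, and a coupon-collector analysis via Lemma~\ref{lem:couponcorollary-10p} applied to the shrinking sequence $s_t = |\nei{t}{2}{u}|$ (where $s_t \to s_t - 1$ with probability $\Omega(s_t/n)$) shows that $s_T = 0$ within $O(n \log n)$ rounds with probability at least $1 - 1/n^2$, whence $\dg{T}{u} = n - 1$.

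In the second subcase, $\nei{t}{3}{u}$ is non-empty, so edges between $\nei{t}{2}{u}$ and $\nei{t}{3}{u}$ necessarily exist, and I would argue that $|\nei{t}{2}{u}|$ itself grows. The key mechanism is that for each edge $(x, w)$ with $x \in \nei{t}{2}{u}$ and $w \in \nei{t}{3}{u}$, the walk $v \to x \to w$ by any $v \in \nei{t}{1}{u} \cap \nei{t}{1}{x}$ brings $w$ into $\nei{t+1}{2}{u}$. Writing $a_x = \dgi{t}{x}{\nei{t}{1}{u}}$ and $c_x = \dgi{t}{x}{\nei{t}{3}{u}}$, the constraint $\dg{t}{x} \ge \md{0}$ combined with $\dgi{t}{x}{\nei{t}{2}{u}} < \md{0}/2$ gives $a_x + c_x > \md{0}/2$. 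A case split on the profile $(a_x, c_x)$ then shows: if a substantial fraction of $x$'s in $\nei{t}{2}{u}$ have large $c_x$, then walks by $\nei{t}{1}{u}$ bring new 2-hop neighbors at rate $\Omega(|\nei{t}{2}{u}|/n)$; otherwise those $x$'s have $a_x > \md{0}/4$ and their total degree is $\dg{t}{x} < 3\md{0}$, so their own two-hop walks land at $u$ with probability $\Omega(1/\md{0})$, making $\dg{t}{u}$ grow at aggregate rate $\Omega(|\nei{t}{2}{u}|/n)$. In either sub-branch, applying Lemma~\ref{lem:couponcorollary-10p} to the quantity that is growing yields the desired $O(n \log n)$ bound with failure probability at most $1/n^2$.

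The main obstacle is the second subcase: a single high-degree node in $\nei{t}{2}{u}$ can dilute the walk probabilities through any fixed $x$ arbitrarily, and the sub-case split on $(a_x, c_x)$ must be carefully executed to guarantee one of the two growth mechanisms is always active at the claimed $\Omega(|\nei{t}{2}{u}|/n)$ rate. Handling this cleanly requires fixing explicit thresholds on $a_x$ and $c_x$ and invoking a pigeonhole-style argument to extract a ``productive'' subpopulation of $\nei{t}{2}{u}$, which is the technically delicate step of the proof.
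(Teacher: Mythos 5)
Your overall framing (condition on the bad event, note that every $w\in \nei{t}{1}{u}$ has all its neighbors in $\cub{u}\cup\nei{t}{1}{u}\cup\nei{t}{2}{u}$ and hence degree $O(\md{0})$, and run a coupon-collector argument via Lemma~\ref{lem:couponcorollary-10p}) is the right starting point, and your first subcase ($\nei{t}{3}{u}=\emptyset$) is fine. But the second subcase has a genuine gap. Both of your growth mechanisms there deliver a per-round success rate of order $\abs{\nei{t}{2}{u}}/n$, and you then want Lemma~\ref{lem:couponcorollary-10p} to conclude $T=O(n\log n)$. That lemma requires the $i$-th success to occur at rate at least $i/m$ with $m=O(n)$, i.e., the rate must compound with the number of successes already accumulated. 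Your rate is tied to $\abs{\nei{t}{2}{u}}$, which is \emph{not} monotone and is \emph{not} lower-bounded by the success count: every time a node of $\nei{t}{2}{u}$ is promoted into $\nei{t}{1}{u}$ (your second sub-branch, or $u$'s own pull), the productive population shrinks again. Nothing prevents $\abs{\nei{t}{2}{u}}$ from hovering at $O(1)$ while $\dg{t}{u}$ still needs to grow by $\Omega(\md{0})$; your bound then degrades to $O(\md{0}\, n)$ rounds, which can be $\omega(n\log n)$. The ``pigeonhole on $(a_x,c_x)$'' step you flag as delicate does not repair this, because the issue is not which mechanism fires but that neither mechanism's rate grows with progress.

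The paper avoids this by choosing a different, monotone and self-reinforcing driver. It fixes a single node $v\in\nei{0}{2}{u}$ and tracks the number of $v$'s neighbors inside $\nei{t}{1}{u}$: once $v$ has $i$ such neighbors $w_1,\dots,w_i$, each $w_j$ (having at least $\md{0}/2$ edges inside $\nei{t}{1}{u}$, because $\abs{\nei{t}{2}{u}}<\md{0}/2$ confines its neighborhood) independently gives $v$'s two-hop walk a $\Omega(1/n)$ chance per round of acquiring a new neighbor in $\nei{t}{1}{u}$, so the $i$-th success genuinely has rate $\Omega(i/n)$ and Lemma~\ref{lem:couponcorollary-10p} applies. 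After $v$ is strongly tied to $\nei{t}{1}{u}$, $u$'s own pull reaches $v$ at rate $\Omega(1/n)$. A union bound absorbs all of $\nei{0}{2}{u}$ into $\nei{}{1}{u}$ in $O(n\log n)$ rounds, and repeating the argument twice more absorbs $\nei{0}{3}{u}$ and $\nei{0}{4}{u}$; Lemma~\ref{lem:moreneighbor-10p} then supplies the $\min\cub{2\md{0},n-1}$ nodes needed to double the degree. This last point is also missing from your write-up: your subcase 2 never identifies a guaranteed pool of $2\md{0}$ recruitable nodes, whereas the paper's use of the 4-hop neighborhood is exactly what makes the target $\min\cub{2\md{0},n-1}$ (rather than $(1+\Omega(1))\md{0}$) attainable.
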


\begin{lemma}[{\bf When the two-hop neighborhood is not too small}]
\label{lem:mingrowth-2-10p}
There exists $T=O(n\log n)$ such that either $\abs{\nei{T}{2}{u}} <
\md{0}/4$ or $\dg{T}{u} \ge \min\cub{(1+1/8)\md{0},n-1}$, with probability at least $1-1/n^2$.
\end{lemma}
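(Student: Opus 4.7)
The plan is to assume, contrary to both disjuncts holding, that $|\nei{t}{2}{u}| \ge \md{0}/4$ and $\dg{t}{u} < \min\cub{(1+1/8)\md{0}, n-1}$ for every $t < T$, and to show that under this assumption $u$ gains at least $\md{0}/8$ new neighbors in $T = O(n\log n)$ rounds with probability at least $1-1/n^2$. The new neighbors necessarily come from $\nei{t}{2}{u}$, since only nodes within distance two of $u$ can become neighbors of $u$ in a single two-hop walk round.

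First I would lower-bound the per-round probability $p_t$ that $u$'s own two-hop walk lands on some node in $\nei{t}{2}{u}$:
\[p_t \;=\; \frac{1}{\dg{t}{u}} \sum_{v \in \nei{t}{1}{u}} \frac{|\nei{t}{1}{v} \cap \nei{t}{2}{u}|}{\dg{t}{v}}.\]
The key structural fact is that every $v \in \nei{t}{1}{u}$ has all of its neighbors inside $\{u\} \cup \nei{t}{1}{u} \cup \nei{t}{2}{u}$, which gives $|\nei{t}{1}{v} \cap \nei{t}{2}{u}| \ge \dg{t}{v} - 1 - \dg{t}{u}$. Splitting into a case where some $v$ has $\dg{t}{v} \ge 2(1+\dg{t}{u})$ (so that the fraction in the summand exceeds $1/2$, yielding $p_t \ge \Omega(1/\md{0})$ from that term alone) and a complementary case where every $v$ satisfies $\dg{t}{v} = O(\md{0})$ (so that dividing the handshake bound $\sum_v |\nei{t}{1}{v} \cap \nei{t}{2}{u}| \ge |\nei{t}{2}{u}| \ge \md{0}/4$ by the uniform degree bound gives $p_t \ge \Omega(1/\md{0})$), I would conclude that $p_t = \Omega(1/\md{0})$ in every round.

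Next I would bootstrap this baseline into an $O(n\log n)$ bound via a coupon-collector argument in the style of Lemma \ref{lem:triangle+extro-10p}. After $u$ has gained its first $i-1$ new neighbors $v_1, \dots, v_{i-1}$ (each originally in $\nei{0}{2}{u}$), each $v_j$ opens up new two-hop walk paths $u \to v_j \to x$ into nodes $x$ still in $\nei{t}{2}{u}$, and these paths contribute additively to the success probability. I would define cumulative events $e_i$ analogous to those in the proof of Lemma \ref{lem:triangle+extro-10p} and argue $\Pr[e_i] \ge i/\Theta(n)$, so that Lemma \ref{lem:couponcorollary-10p} applied with $k = \md{0}/8$ and $m = n$ bounds $\sum_i X_i = O(n\log n)$ with failure probability at most $1/n^2$.

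The main obstacle will be establishing the $i/\Theta(n)$ scaling in the last step rather than the weaker $i/\Theta(\md{0} n)$ bound one gets from naive accounting; without it the coupon-collector yields only $O(\md{0}^2)$ rounds, which is too large when $\md{0} = \omega(\sqrt{n \log n})$. To close this gap I would carefully track, across rounds, the growing reach $\sum_j |\nei{t}{1}{v_j} \cap \nei{t}{2}{u}|/\dg{t}{v_j}$ using the structural degree constraint from Step 2, and would also account for the symmetric contributions from walks originating at $x \in \nei{t}{2}{u}$ that land on $u$; these likewise grow as the ambient graph densifies due to other nodes' walks. The changing topology across rounds creates nontrivial dependencies that I would handle by mirroring the $R^j_t$ bookkeeping from Lemma \ref{lem:triangle+extro-10p}, adapted to the pull semantics where one must reason about both endpoints of the walk rather than a single triangulation partner.
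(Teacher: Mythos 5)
Your overall plan --- negate both disjuncts and show $u$ gains $\md{0}/8$ new neighbors in $O(n\log n)$ rounds via a coupon-collector argument with success probabilities growing like $i/\Theta(n)$ --- matches the paper's strategy, and your baseline bound $p_t=\Omega(1/\md{0})$ is correct as far as it goes. But the entire difficulty of the lemma lives in the step you defer, namely establishing $\Pr[e_i]\ge i/\Theta(n)$, and the tools you name do not deliver it. Your structural inequality $\abs{\nei{t}{1}{v_j}\cap\nei{t}{2}{u}}\ge\dg{t}{v_j}-1-\dg{t}{u}$ is vacuous whenever $\dg{t}{v_j}\le 1+\dg{t}{u}$, which is entirely possible for a newly acquired neighbor $v_j$ of near-minimum degree; such a $v_j$ may contribute \emph{zero} new out-paths $u\to v_j\to x$, so the additive growth of the ``reach'' is not automatic. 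The symmetric walks from $x\in\nei{t}{2}{u}$ toward $u$ do not rescue the scaling either: they succeed with probability only $\Theta(\md{0}/n^2)$ per target in the worst case, since $\dg{t}{x}$ can be close to $n$.

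The paper closes exactly this gap with a dichotomy on the nodes of $\nei{0}{2}{u}$ according to whether they are strongly or weakly tied to $\nei{0}{1}{u}$, and your proposal contains neither half. If at least $\md{0}/8$ of them are strongly tied, each such $v$ is reached by $u$'s \emph{own} walk with probability $\Omega(1/n)$ per round ($u$ picks one of the $\ge\md{0}/4$ common neighbors with constant probability, which then picks $v$ with probability at least $1/n$), and a per-target union bound finishes in $O(n\log n)$ rounds with no coupon collector at all. Otherwise at least $\md{0}/8$ nodes of $\nei{0}{2}{u}$ are weakly tied, and the witnesses $v_j$ are drawn only from this weakly tied pool: being weakly tied, each $v_j$ has $\Omega(\md{0})$ edges outside $\nei{t}{1}{u}$, and a further three-way case analysis on the endpoints of those edges (already in $\nei{t}{1}{u}$ / strongly tied / weakly tied) shows that either $u$'s degree has already grown by $\md{0}/8$, or the argument reverts to the strong-tie case, or $v_j$ replenishes the out-path pool by $\Omega(\md{0})$ --- which is precisely what yields $\Pr[e_i]\ge i/(9n)$. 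Without restricting witnesses to weakly tied nodes and without the strong-tie case as a fallback, the $i/\Theta(n)$ claim is unsupported, so the proposal has a genuine gap at its central step.
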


\begin{theorem}[{\bf Upper bound for two-hop walk process}]
\label{thm:graph+randwalk-10p}
For connected undirected graphs, the two-hop walk process completes in
$O(n\log^2 n)$ rounds with high probability.
\end{theorem}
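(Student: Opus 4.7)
The plan is to follow the blueprint of Theorem~\ref{thm:triangulation-10p}: amplify the minimum degree by a constant factor (or all the way to $n-1$) in $O(n \log n)$ rounds, and then iterate this amplification $O(\log n)$ times. Since each invocation of the inner argument costs $O(n \log n)$ rounds, the total is $O(n \log^2 n)$.

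The first step is to pin down a fixed node $u$ and apply Lemmas~\ref{lem:mingrowth-1-10p} and~\ref{lem:mingrowth-2-10p} simultaneously to the same window of $T = O(n \log n)$ rounds (taking $T$ to be the larger of the two time bounds, which is still $O(n \log n)$). Each lemma outputs a dichotomy about the state at round $T$: Lemma~\ref{lem:mingrowth-1-10p} gives $\abs{\nei{T}{2}{u}} \ge \md{0}/2$ or $\dg{T}{u} \ge \min\{2\md{0}, n-1\}$, and Lemma~\ref{lem:mingrowth-2-10p} gives $\abs{\nei{T}{2}{u}} < \md{0}/4$ or $\dg{T}{u} \ge \min\{(1+1/8)\md{0}, n-1\}$. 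The crucial observation is that the two non-degree alternatives, $\abs{\nei{T}{2}{u}} \ge \md{0}/2$ and $\abs{\nei{T}{2}{u}} < \md{0}/4$, are mutually exclusive. Therefore on the intersection of the two success events (which occurs with probability at least $1 - 2/n^2$ by a union bound), at least one of the two degree lower bounds must hold, giving $\dg{T}{u} \ge \min\{(1+1/8)\md{0}, n-1\}$.

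The second step is a union bound over the $n$ choices of $u$: with probability at least $1 - 2/n$, every node satisfies the degree bound after $T$ rounds, so $\md{T} \ge \min\{(1+1/8)\md{0}, n-1\}$. This is one "phase" in which the minimum degree either grows by a factor of $9/8$ or the graph becomes complete. Since the two lemmas are stated for an \emph{arbitrary} connected starting graph and do not require any structural assumption beyond the initial minimum degree, one can reset the clock and repeat the phase analysis with $\md{0}$ replaced by the new minimum degree. After $O(\log n)$ phases the minimum degree reaches $n-1$, and a final union bound over the $O(\log n)$ phases yields the claimed high-probability bound.

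The main obstacle is really the dichotomy step: one must choose the thresholds $\md{0}/2$ and $\md{0}/4$ in Lemmas~\ref{lem:mingrowth-1-10p} and~\ref{lem:mingrowth-2-10p} to be complementary (leaving no gap in which neither lemma is informative) while still ensuring that each lemma's hypothesis is strong enough to drive degree growth in $O(n\log n)$ rounds. This is the same kind of delicate threshold balancing that appears in the triangulation proof between strongly-tied and weakly-tied neighbors, and it is what sidesteps the non-monotonicity of the process — the amplification argument never needs to compare the process on $G$ with its behavior on a subgraph.
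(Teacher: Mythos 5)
Your proposal follows the paper's proof essentially verbatim in structure: combine Lemmas~\ref{lem:mingrowth-1-10p} and~\ref{lem:mingrowth-2-10p} to show that every node of degree below $(1+1/8)\md{0}$ reaches degree $\min\cub{(1+1/8)\md{0},n-1}$ within $O(n\log n)$ rounds, union-bound over the $n$ nodes, and iterate the minimum-degree amplification $O(\log n)$ times. The only (cosmetic) difference is how the dichotomy is closed: the paper cases on whether $\abs{\nei{0}{2}{u}}$ is at least $\md{0}/2$ and observes that whenever the relevant neighborhood condition is later violated, the nodes that left $\nei{t}{2}{u}$ must have entered $\nei{t}{1}{u}$ (which certifies the degree growth directly and also covers the fact that $\abs{\nei{t}{2}{u}}$ is not monotone in $t$), whereas you invoke the mutual exclusivity of the two escape clauses at a common terminal round $T$ --- an equivalent argument given the lemma statements as written.
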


The proof of Theorem~\ref{thm:hop+lower-10p} is essentially the same as
that for Theorem \ref{thm:tri+lower-10p}, and is omitted.

\begin{theorem}[{\bf Lower bound for two-hop walk process}]
\label{thm:hop+lower-10p}
For any connected undirected graph $G$ that has $k \ge 1$ edges less
than the complete graph the two-hop process takes $\Omega(n \log
k)$ steps to complete with probability at least $1 - O\rb{e^{-k^{1/4}}}$.
\end{theorem}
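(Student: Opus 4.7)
The plan is to mirror the proof of Theorem~\ref{thm:tri+lower-10p} in three stages: (i) find a round $t$ such that $\gf{t}$ has at least $m = \Omega(\sqrt{k})$ missing edges and $\md{t} \geq n/3$; (ii) bound the per-round probability of adding any fixed missing edge by $O(1/n)$; and (iii) conclude via the same conditional-probability chain used in the triangulation lower bound.

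The per-round computation in stage (ii) is straightforward for the two-hop walk. A missing edge $\{u, v\}$ gets added in a given round only if $u$'s walk samples some $x \in N(u) \cap N(v)$ (probability $1/d(u) \leq 3/n$) and then $x$ samples $v$ (probability $1/d(x) \leq 3/n$), or the symmetric event from $v$'s walk. Summing over the at most $n$ common neighbors and the two symmetric directions yields a per-round addition probability of at most $18/n$. Stage (iii) then transfers verbatim: letting $X_i$ denote the number of post-$t$ rounds until the $i$th tracked missing edge is added, the same conditional-probability chain argument gives
\[
\prob{X_1 \leq T, \dots, X_m \leq T} \;\leq\; \rb{1 - \frac{1}{\sqrt{m}}}^m \;=\; O\rb{e^{-\sqrt{m}}} \;=\; O\rb{e^{-k^{1/4}}}
\]
for $T = \Theta(n \log m) = \Theta(n \log k)$.

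Stage (i) is the main obstacle. In the triangulation proof it suffices to observe that any vertex's degree can at most double in one round, since new incident edges arise only from triangulations by its current neighbors; taking $t$ as the first round with $\md{t} \geq n/3$ then forces $\md{t} \leq 2n/3$, leaving $\Omega(n)$ missing edges. In the two-hop walk the up to $n - 1 - d(u)$ vertices in $N^2(u)$ can also deliver edges to $u$, so this crude bound fails. I plan to handle stage (i) in two cases. When $k \leq 2n/3 - 1$, the complement graph has maximum degree at most $k$, so $\md{0} \geq n - 1 - k \geq n/3$, and $t = 0$ suffices since $k \geq \sqrt{k}$. When $k$ is larger, I would define $t$ as the first round with $\md{t} \geq n/3$ and show that the degree of the minimum-degree vertex $v$ of $\gf{t-1}$ grows by only $O(\log n)$ in the single round from $t-1$ to $t$ with high probability. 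This should follow from the expectation bound $\sum_{w \in \nei{t-1}{2}{v}} \prob{w \text{ adds } v} \leq d_{t-1}(v)/\md{t-1} = 1$, combined with a Chernoff concentration on the corresponding Bernoulli indicators (the two-hop walks from different sources use fresh, independent randomness). Consequently $\md{t} \leq d_t(v) \leq \md{t-1} + O(\log n) < 2n/3$ for large $n$, so $v$ retains $\Omega(n) = \Omega(\sqrt{k})$ missing neighbors at time $t$, furnishing the intermediate state required for stages (ii) and (iii).
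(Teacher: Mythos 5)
The paper gives no separate proof of this theorem; it states only that the argument is ``essentially the same'' as that for Theorem~\ref{thm:tri+lower-10p}, and your stages (ii) and (iii) do indeed reproduce that proof with the correct per-round probability $O(1/n)$ for the two-hop walk. Where you genuinely add value is stage (i): you correctly observe that the one step of the triangulation proof that does \emph{not} transfer verbatim is the deterministic ``degree at most doubles per round'' bound, since in the two-hop walk every vertex of $\nei{t}{2}{u}$ (of which there may be far more than $\dg{t}{u}$) can deliver an edge to $u$, not just $u$'s neighbors. Your probabilistic replacement is sound: for the minimum-degree vertex $v$ the expected number of incoming edges in one round is $\sum_{x\in \nei{t-1}{1}{v}}\frac{1}{\dg{t-1}{x}}\sum_{w\in \nei{t-1}{1}{x}}\frac{1}{\dg{t-1}{w}} \le \dg{t-1}{v}/\md{t-1}=1$, the contributing indicators are independent across sources, and concentration keeps $d_t(v)$ well below $2n/3$. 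Two small repairs are needed to make this airtight. First, when $k=\Omega(n)$ the theorem's success probability $1-O(e^{-k^{1/4}})$ is exponentially close to $1$, so a ``with high probability'' (i.e., $1-1/\poly(n)$) bound on the one-round growth is too weak; instead allow growth up to, say, $\sqrt{n}$, which still leaves $\md{t}\le n/3+\sqrt{n}+1<2n/3$ and fails only with probability $e^{-\Omega(\sqrt{n}\log n)}\ll e^{-k^{1/4}}$. Second, $t$ is a random stopping time, so the concentration bound must be union-bounded over the (polynomially many, by Theorem~\ref{thm:graph+randwalk-10p}) rounds before completion; this costs only a polynomial factor in the already negligible failure probability. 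With these adjustments your argument is correct and is, in fact, more complete than what the paper records.
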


\section{Two-hop walk in directed graphs}
\label{sec:directed-10p}
In this section, we analyze the two-hop walk process in directed
graphs.  We say that the process terminates at time $t$ if for every
node $u$ and every node $v$, $\gf{t}$ contains the edge $(u,v)$
whenever $u$ has a path to $v$ in $\gf{0}$.

\begin{theorem}
\label{thm:directed.upper-10p}
On any $n$-node directed graph, the two-hop walk terminates in
$O(n^2 \log n)$ rounds with high probability.  Furthermore, there
exists a (weakly connected) directed graph for which the process takes
$\Omega(n^2 \log n)$ rounds to terminate.
\end{theorem}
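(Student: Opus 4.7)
We handle the two parts of Theorem~\ref{thm:directed.upper-10p} separately, with the upper bound being the more substantial component.

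For the upper bound, the plan is to fix any ordered pair $(u,w)$ for which $w$ is reachable from $u$ in $G_0$, and show that the directed edge $(u,w)$ appears in $G_t$ within $t=O(n^2\log n)$ rounds with probability at least $1-1/n^4$; since the reachability set $R(u)$ of each node $u$ is preserved over time (the process only adds edges inside reachability sets), and there are at most $n(n-1)$ such pairs, a union bound then yields the theorem.  To bound the time for a single pair I would set up a coupon-collector analysis centered on $u$: let $N_t\subseteq R(u)\setminus\{u\}$ be the out-neighbors of $u$ already collected in its reachability set.  Whenever $N_t$ is a proper subset of $R(u)\setminus\{u\}$, a short extremal argument shows that at least one uncollected node $w'$ must lie at distance exactly $2$ from $u$ in $G_t$: otherwise the out-neighborhoods of $N_t\cup\{u\}$ would stay inside $N_t\cup\{u\}$, contradicting reachability of any uncollected node.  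For such a $w'$, the probability that $u$'s two-hop walk lands at $w'$ in round $t$ is at least $1/\rb{\dg{t}{u}\cdot \dg{t}{v}}\ge 1/n^2$, using the intermediate witness $v\in N_t$ on a shortest $u$-$w'$ path.

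The hard part will be upgrading this per-round progress bound into the tight $O(n^2\log n)$ time rather than a naive $O(n^3\log n)$.  A level-by-level cascade (first collect all distance-$2$ nodes, then wait for distance-$3$ nodes to become distance-$2$, and so on) loses a factor of $n$.  To save it, I would argue that progress at different distance levels happens in parallel: as soon as any distance-$2$ node $w'$ is collected, all of $w'$'s out-neighbors in $R(u)\setminus N_t$ immediately become new distance-$2$ candidates, replenishing the pool of cheap witnesses.  My plan is to formalize this by showing that whenever $r$ uncollected reachable nodes remain at $u$, the probability that $u$ collects a new one in the next round is $\Omega(r/n^2)$; the resulting telescoping sum $\sum_{r=1}^{n} n^2/r = O(n^2\log n)$ bounds the expected completion time for $u$, and concentration for sums of independent geometrics boosts this to high probability.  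I expect this amortized bookkeeping linking the pigeonhole/parallel-progress claim to the coupon-collector sum to be the main technical hurdle.

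For the lower bound, the plan is to exhibit a concrete weakly connected example.  Set $n=2k+1$ and take vertices $\{u, v_1,\ldots,v_k, w_1,\ldots,w_k\}$ with directed edges $u\to v_i$ for each $i\in\{1,\ldots,k\}$ and $v_1\to w_j$ for each $j\in\{1,\ldots,k\}$, and no other edges.  Then $R(u)=\{v_1,\ldots,v_k,w_1,\ldots,w_k\}$, while every other node either has empty reachability set or already has all its required out-edges in $G_0$, so termination is equivalent to $u$ having acquired each of the $k$ edges $(u,w_j)$.  Any such edge can only be added by the walk $u\to v_1\to w_j$, which fires with probability exactly $(1/k)\cdot(1/k)=1/k^2$ for each fixed $j$ in each round, and the $k$ events (for different $j$) are mutually exclusive within a round.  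This is exactly the coupon-collector problem with $k$ coupons at per-round, per-coupon success probability $1/k^2$; the standard coupon-collector lower bound then shows that acquiring all $k$ edges requires $\Omega(k^2\log k)=\Omega(n^2\log n)$ rounds with high probability.
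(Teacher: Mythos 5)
Your lower bound construction is sound and close in spirit to the paper's: the paper uses a graph on $\{1,\dots,n\}$ in which $\Theta(n)$ edges $(3i,3i+2)$ must each be created by a single two-hop walk of probability $O(1/n^2)$ per round, and concludes via independence of the per-edge events; your star-based gadget does the same with mutually exclusive per-round events, which the classical coupon-collector lower bound handles. (Only a cosmetic issue there: the per-round probability is $1/(\dg{t}{u}\cdot k)\le 1/k^2$ rather than exactly $1/k^2$ once $u$ starts acquiring the $w_j$, but an upper bound on the success probability is all the lower-bound direction needs.)

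The upper bound, however, has a genuine gap, and it is exactly at the step you flagged as the main hurdle. The claim that, with $r$ uncollected reachable nodes remaining, $u$ collects a new one in the next round with probability $\Omega(r/n^2)$ is false. Node $u$ performs only one two-hop walk per round, and that walk can only land on nodes currently at distance two from $u$; the number of uncollected nodes at distance two can be $1$ even when $r=\Theta(n)$. The paper's own strongly connected lower-bound instance (Theorem~\ref{thm:directed.lower-10p}, Figure~\ref{fig:digraph+lower-10p}) is a counterexample: for a node $u$ in the clique $\{1,\dots,n/2\}$ at time $0$, the only uncollected node at distance two is $n/2+1$, reachable only through the single intermediate $n/2$, so the per-round success probability is $\Theta(1/n^2)$ while $r=\Theta(n)$. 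Replenishment of the distance-two pool after each collection does not rescue the claim, because the pool can remain a bottleneck of size $O(1)$ throughout. The paper obtains the harmonic-sum speedup from a different source of parallelism: fix a pair $(u,v)$ and a shortest $u$--$v$ path $v_0,\dots,v_m$; in each round every interior node $v_i$ independently attempts the shortcut $v_i\to v_{i+2}$ with probability at least $1/n^2$, so a path of current length $\ell$ shortens with probability $\Omega(\ell/n^2)$, and Lemma~\ref{lem:couponcorollary-10p} then gives $O(n^2\log n)$ rounds for the path to contract to the single edge $(u,v)$, with a union bound over all pairs. In short, the factor of $n$ is saved by many \emph{nodes} acting in parallel along one path, not by one node having many targets; your $u$-centric bookkeeping cannot recover it without importing essentially this path-shortening argument.
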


The lower bound in the above theorem, whose proof is deferred to
Appendix~\ref{app:directed}, takes advantage of the fact that the
initial graph is not strongly connected.  Extending the above analysis
for strongly connected graphs appears to be much more difficult since
the events corresponding to the addition of new edges interact in
significant ways.  We present an $\Omega(n^2)$ lower bound for a
strongly connected graph by a careful analysis that tracks the event
probabilities with time and takes dependencies into account.  The
graph $\gf{0}$, depicted in Figure~\ref{fig:digraph+lower-10p}, is
similar to the example in~\cite{leighton} used to establish an
$\Omega(n)$ lower bound on the Random Pointer Jump algorithm, in which
each node gets to know all the neighbors of a random neighbor in each
step.  Since the graphs are constantly changing over time in both the
processes, the dynamic edge distributions differ significantly in the
two cases, and we need a substantially different analysis.  Due to
space constraints, we defer the proof to Appendix~\ref{app:directed}.
\begin{theorem}
\label{thm:directed.lower-10p}
There exists a strongly connected directed graph for which
the expected number of rounds taken by the two-hop process is
$\Omega(n^{2})$.
\end{theorem}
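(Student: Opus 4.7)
The plan is to exhibit a strongly connected directed graph $\gf{0}$ on $n$ vertices for which the two-hop walk requires $\Omega(n^2)$ expected rounds to reach the transitive closure. Following the spirit of the graph used by Leighton to give an $\Omega(n)$ lower bound on the Random Pointer Jump algorithm, I would design $\gf{0}$ around a specific ``hard-to-create'' directed edge $(u,w)$. The construction should have $u$'s out-degree $\Theta(n)$ from the outset (so that picking any particular out-neighbor is unlikely), $w$ initially reachable from $u$ only through a designated small set of gateways $S_0$ each with out-degree $\Theta(n)$ (so that each gateway is itself unlikely to select $w$), and enough additional back-edges to make the graph strongly connected without creating alternative short paths into $w$.

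With this construction, the conditional probability that round $t$ creates the edge $(u,w)$ equals $\sum_{y\in S_t} \frac{1}{\dg{t}{u}\,\dg{t}{y}}$, where $S_t$ denotes the set of $u$'s out-neighbors that point to $w$ at time $t$. A direct degree-counting argument, combined with a monotonicity observation that out-degrees only grow and are capped at $\Theta(n)$ by the construction, bounds each summand by $O(1/n^2)$; hence the per-round success probability is $O(\abs{S_t}/n^2)$. If one could show $\abs{S_t}=O(1)$ for the first $\Omega(n^2)$ rounds, the theorem would follow from a direct union bound.

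The heart of the proof, and the main technical obstacle, is therefore controlling the growth of $\abs{S_t}$: this is a secondary random process in which new members arise when other nodes (themselves out-neighbors of $u$) acquire $w$ as an out-neighbor via their own two-hop walks, and this growth is coupled to the target event through the shared graph randomness. I would set up a nested ``good event'' $E_t$ asserting that all relevant degrees stay $\Theta(n)$ and that $\abs{S_s}$ stays below an appropriate slowly-growing envelope for every $s\le t$. To prove $\Pr[E_{cn^2}]\ge 1/2$ for a suitably small constant $c$, I would couple the growth of $\abs{S_t}$ with a pure-birth (Yule-type) chain of rate $O(\abs{S_t}/n)$ and apply standard concentration for such chains, carefully handling dependencies by conditioning on the realized history and arguing that the per-round estimates above remain valid conditionally on $E_{t-1}$.

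Combining the two pieces, the cumulative probability that $(u,w)$ has been added by time $cn^2$ is at most $\sum_{t=0}^{cn^2}O(\abs{S_t}/n^2)$, which can be forced below $1/2$ under $E_{cn^2}$ by the choice of envelope; together with $\Pr[E_{cn^2}]\ge 1/2$ this gives $\expt{T_{\mathrm{term}}}=\Omega(n^2)$. The chief difficulty I anticipate is exactly the dependency analysis flagged by the authors: $\abs{S_t}$, the degrees $\dg{t}{u}$ and $\dg{t}{y}$, and the indicator of the target event are all driven by the same underlying randomness, so the proof must be organized around an invariant whose failure probability is controlled by a careful induction on $t$ rather than by any naive union bound over pairs of nodes.
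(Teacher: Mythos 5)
There is a genuine gap, and it is quantitative rather than cosmetic: the plan to keep $\abs{S_t}=O(1)$ for the first $\Omega(n^2)$ rounds is incompatible with the very growth rate you write down. With all relevant degrees $\Theta(n)$, a node $y$ in $u$'s out-neighborhood joins $S_{t+1}$ with probability $\Theta\rb{\abs{N^{out}_t(y)\cap I_t}/n^2}$, where $I_t$ is the in-neighborhood of $w$; summing over the $\Theta(n)$ candidates $y$ gives an expected per-round increment of $\Theta(\abs{I_t}/n)$ --- exactly your Yule rate. A pure-birth chain with per-individual rate $1/n$ multiplies by a constant factor every $\Theta(n)$ rounds, so $\abs{S_t}$ reaches $\omega(1)$ by time $\omega(n)$ and $\Theta(n)$ by time $O(n\log n)$, at which point $(u,w)$ is created within $O(n)$ further rounds. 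Hence any single target edge whose endpoints are at bounded distance (your one level of gateways) is added in $O(n\log n)$ expected time, and no choice of ``slowly-growing envelope'' can make $\sum_{t\le cn^2}\abs{S_t}/n^2<1/2$. To slow the infection of $I_t$ you would need the in-neighborhoods of the gateways, the in-neighborhoods of those, and so on, to all be small --- i.e., $\Omega(n)$ nested levels of protection, since the infection advances one level roughly every $\Omega(n)$ rounds. That recursive structure is a directed path of length $\Omega(n)$, not a single gateway layer.

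This is precisely what the paper's construction and analysis supply, and it is a different decomposition from yours: the hard instance is a clique on $\{1,\dots,n/2\}$ feeding a directed path $n/2\to n/2+1\to\cdots\to n$ with all back-edges added for strong connectivity. Rather than tracking one edge, the proof tracks the frontier $X$ of the first ``untouched'' cut $C_x$, proves the key estimate $\prb{h}{t}\le(\alpha t/n^2)^{h-1}$ on the probability that a length-$h$ shortcut exists by time $t\le \eps n^2$, shows each phase (an advance of $X$) lasts $\Omega(n)$ rounds in expectation, and --- the delicate part, handling exactly the correlations you flag --- shows $E[X_{i+1}-X_i]=O(1)$ by spacing out a sequence of cuts and bounding conditional touching probabilities. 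The lower bound is then (number of phases) $\times$ (phase length) $=\Omega(n)\cdot\Omega(n)$. Your invariant/induction-on-history instinct is sound, but it must be applied to a cascade of $\Omega(n)$ bottlenecks rather than to a single bottleneck edge.
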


\begin{figure}[ht]
\begin{center}
  \includegraphics[width=5in]{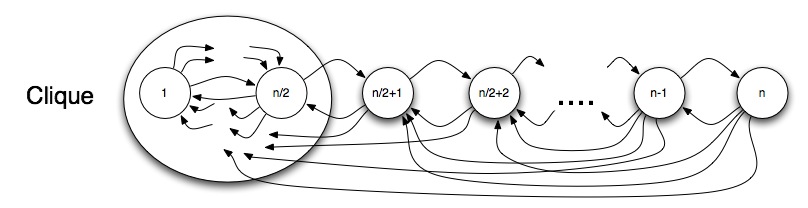}
  \caption{Lower bound example for two-hop walk process in directed graphs}
  \label{fig:digraph+lower-10p}
\end{center}
\end{figure}

\junk{
\begin{proofsketch}
The graph $\gf{0}=(V,E)$ is depicted in Figure~\ref{fig:digraph+lower-10p}
and formally defined as $\gf{0} = (V,E)$ where $V=\cub{1,2,\dots,n}$
with $n$ being even, and
\[
E = \cub{\rb{i,j}: 1 \le i,j \le n/2} \cup \cub{\rb{i,i+1} : n/2 \le i <
  n} \cup \cub{\rb{i,j}: i > j, i > n/2, i,j\in V}.
\]
We first establish an upper bound on the probability that edge
$(i,i+h)$ is added by the start of round $t$, for given $i$, $1 \le i
\le n -h$.  Let $\prb{h}{t}$ denote this probability.  The following
base cases are immediate: $\prb{h}{0}$ is $1$ for $h = 1$ and $h < 0$,
and $0$ otherwise.  Next, the edge $(i,i+h)$ is in $\gf{t+1}$ if and
only if $(i,i+h)$ is either in $\gf{t-1}$ or added in round $t$.  In
the latter case, $(i,i+h)$ is added by a two-hop walk $i \rightarrow i
+ k \rightarrow i + h$, where $-i < k \le n - i$.  Since the
out-degree of every node is at least $n/2$, for any $k$ the
probability that $i$ takes such a walk is at most $4/n^2$.
\begin{eqnarray}
\prb{h}{t+1} & \le & \prb{h}{t} + \frac{4}{n^2} \sum_{k > -i}^{n-i} \prb{k}{t}\prb{h-k}{t} = \prb{h}{t} + \frac{4}{n^2} \left(\sum_{k = 1}^{i-1} \prb{h+k}{t} + \sum_{k=1}^{h-1} \prb{k}{t}\prb{h-k}{t} + \sum_{k=h+1}^{n-i} \prb{k}{t}\right) \label{eqn:lower-10p}
\end{eqnarray}
In Appendix~\ref{app:directed}, we show by induction on $t$ that
\begin{eqnarray}
\prb{h}{t} \le \left(\frac{\alpha t}{n^2}\right)^{h-1}, \mbox{ for all } t \le \eps n^2  \label{eqn:prb-10p}
\end{eqnarray}
where $\alpha$ and $\eps$ are positive constants that are specified
later.
\junk{
The induction base is immediate.  For the induction step, we use the
induction hypothesis for $t$ and Equation~\ref{eqn:lower-10p} and bound
$\prb{h}{t+1}$ as follows.
\begin{eqnarray*}
\prb{h}{t+1} & \le & \left(\frac{\alpha t}{n^2}\right)^{h-1} + \frac{4}{n^2} \left(\sum_{k = 1}^{i-1} \left(\frac{\alpha t}{n^2}\right)^{h+k-1} + \sum_{k=1}^{h-1} \left(\frac{\alpha t}{n^2}\right)^{k-1} \left(\frac{\alpha t}{n^2}\right)^{h-k-1} + \sum_{k=h+1}^{n-i} \left(\frac{\alpha t}{n^2}\right)^{k-1}\right)\\
& \le & \left(\frac{\alpha t}{n^2}\right)^{h-1} + \frac{4}{n^2} \left( (h-1)\left(\frac{\alpha t}{n^2}\right)^{h-2} + \left(\frac{\alpha t}{n^2}\right)^{h} \frac{2}{1 - \alpha t/n^2}\right)\\
& \le & \left(\frac{\alpha t}{n^2}\right)^{h-1} + (h-1)\left(\frac{\alpha t}{n^2}\right)^{h-2}\frac{1}{n^2} \left(4 + \frac{4\eps^2}{(1 - \alpha\eps)}\right)\\
& \le & \left(\frac{\alpha t}{n^2}\right)^{h-1} + (h-1) \left(\frac{\alpha t}{n^2}\right)^{h-2}\frac{\alpha}{n^2}\\
& \le & \left(\frac{\alpha (t+1)}{n^2}\right)^{h-1}.
\end{eqnarray*}
(In the second inequality, we combine the first and third summations
and bound them by their infinite sums.  In the third inequality, we
use $t \le \eps n^2$.  For the fourth inequality, we set $\alpha$
sufficiently large so that $\alpha \ge 4 + 4/(1-\alpha \eps)$.  The
final inequality follows from Taylor series expansion.)
}

For an integer $x$, let $C_x$ denote the cut $(\{u: u \le x\}, \{v, v
> x\})$.  We say that a cut $C_x$ is {\em untouched}\/ at the start of
round $t$ if the only edge in $\gf{t}$ crossing the cut $C_x$ is the
edge $(x, x+1)$; otherwise, we say $C_x$ is {\em touched}.  Let $X$
denote the smallest integer such that $C_X$ is untouched.  We note
that $X$ is a random variable that also varies with time.  Initially,
$X = n/2$.

We divide the analysis into several phases, numbered from $0$.  A
phase ends when $X$ changes.  Let $X_i$ denote the value of $X$ at the
start of phase $i$; thus $X_0 = n/2$.  Let $T_i$ denote the number of
rounds in phase $i$.  A new edge is added to the cut $C_{X_i}$ only if
either $X_i$ selects edge $(X_i,X_i+1)$ as its first hop or a node $u
< X_i$ selects $u \rightarrow X_i \rightarrow X_i + 1$.  Since the
degree of every node is at least $n/2$, the probability that a new
edge is added to the cut $C_i$ is at most $2/n + n(4/n^2) = 6/n$,
implying that $E[T_i] \ge n/6$.

We now place a bound on $X_{i+1}$.  Fix a round $t \le \eps n^2$, and
let $E_x$ denote the event that $C_x$ is touched by round $t$.  We
first place an upper bound on the probability of $E_x$ for arbitrary
$x$ using Equation~\ref{eqn:prb-10p}.
\[
\Pr[E_x] \le \sum_{h \ge 2} h \left(\frac{\alpha t}{n^2} \right)^{h-1} \le \frac{\alpha t (4 - 3(\alpha t)/n^2 + (\alpha t)^2/n^4)}{n^2(1-(\alpha t)/n^2)^3},
\]
for $t \le \eps n^2$, where we use the inequality $\sum_{h \ge 2} h^2
\delta^h = \delta(4-3\delta+\delta^2)/(1-\delta)^3$ for $0 < \delta <
1$.  We set $\eps$ sufficiently small so that
$(4-3\eps+\eps^2)/(1-\eps)^3 \le 5$, implying that the above
probability is at most $5\eps$.

If $E_x$ were independent from $E_y$ for $x \neq y$, then we can
invoke a straightforward analysis using a geometric probability
distribution to argue that $E[X_{i+1} - X_i]$ is at most $1/(1-5\eps)
= O(1)$; to see this, observe that $X_{i+1} - X_i$ is stochastically
dominated by the number of tosses of a biased coin needed to get one
head, where the probability of tail is $5\eps$.  The preceding
independence does not hold, however; in fact, for $y > x$, $\Pr[E_y
  \mod E_x] > \Pr[E_y]$.  We show that the impact of this correlation
is very small when $x$ and $y$ are sufficiently far apart.  We
consider a sequence of cuts $C_{x_1}, C_{x_2}, \ldots, C_{x_\ell},
\ldots$ where $x_\ell = x_{\ell-1} + c\ell$, for a constant $c$ chosen
sufficiently large, and we set $x_0 = X_i + 2$.  In
Appendix~\ref{app:directed}, we bound the conditional probability of
$E_{x_\ell}$ given $E_{x_{\ell -1}} \cap E_{x_{\ell-2}} \cdots \cap
E_{x_1}$ to be at most $6\eps$, by setting $\eps$ and $c$
appropriately.  Since $X_{i+1}$ is at most the smallest $x_\ell$ such
that $C_{x_\ell}$ is untouched, we obtain that
\begin{eqnarray*}
E[X_{i+1} - X_i] \le 2 + \sum_{\ell \ge 2} (6 \eps)^\ell c\ell^2 \le c',
\end{eqnarray*}
for a constant $c'$ chosen sufficiently large.  We thus obtain that
after $\eps' n$ phases, $E[X]$ is at most $\eps'c' n$, where $\eps'$
is chosen sufficiently small so that $n - E[X]$ is $\Omega(n)$.  Since
the expected length of each phase is at least $n/6$, it follows that
the expected number of rounds it takes for the two-hop process to
complete is $\Omega(n^2)$.
\end{proofsketch}
}

\section{Conclusion}
We have analyzed two natural gossip-based discovery processes in
networks and showed almost-tight bounds on their convergence in
arbitrary networks.  Our processes are motivated by the resource
discovery problem in distributed networks as well as by the evolution
of social networks.  We would like to study variants of the processes
that take into account failures associated with forming connections,
the joining and leaving of nodes, or having only a subset of
nodes to participate in forming connections.  We believe our
techniques can be extended to analyze such situations as well.  From a
technical standpoint, the main problem left open by our work is to
resolve the logarithmic factor gap between the upper and lower bounds.
It is not hard to show that from the perspective of increasing the
minimum degree by a constant factor, our analysis is tight up to
constant factors.  It is conceivable, however, that a sharper upper
bound can be obtained by an alternative analysis that uses a
``smoother'' measure of progress.

\newpage

\appendix

\section{Proofs for Section~\ref{sec:prelim}}
\label{app:prelim}
\begin{LabeledProof}{Lemma~\ref{lem:moreneighbor-10p}}
  If $\nei{t}{3}{u}$ is not an empty set, consider node $v\in
  \nei{t}{3}{u}$. Since $\dg{t}{v} \ge \md{t}$, we have
  $\abs{\cup_{i=2}^4 \nei{t}{i}{u}} \ge
  \md{t}$. $\abs{\nei{t}{1}{u}}\ge \md{t}$ because $\dg{t}{u}\ge
  \md{t}$. We also know $\nei{t}{1}{u}$ and $\cup_{i=2}^4
  \nei{t}{i}{u}$ are disjoint. Thus, $\abs{\cup_{i=1}^4 \nei{t}{i}{u}}
  \ge 2\md{t}$.  If $\nei{t}{3}{u}$ is an empty set, then
  $\nei{t}{1}{u}\cup \nei{t}{2}{u}=n-1$ because $\gf{t}$ is
  connected. Thus $\abs{\cup_{i=1}^4 \nei{t}{i}{u}} = n-1$. Combine
  the above 2 cases, we complete the proof of this lemma.
\end{LabeledProof}

\begin{LabeledProof}{Lemma~\ref{lem:moreneighbor-10p}}
  Since $X$ only increases with $k$, with out loss of generality
  assume that $k = m$. Now we can view this as {\em coupon collector
    problem} \cite{upfal} where $X_{m+1-i}$ is the number of steps to
  collect the $i$th coupon. Consider the probability of not obtaining
  the $i$th coupon after $(c+1)n\ln n$ steps. This probability is
\[\rb{1-\frac{1}{n}}^{(c+1)n\ln n} < e^{-(c+1)\ln n} =
\frac{1}{n^{c+1}}\] By union bound, the probability that some coupon
has not been collected after $(c+1)n\ln n$ steps is less than
$1/n^c$. And this completes the proof of this lemma.
\end{LabeledProof}

\section{Lower bound proof for the triangulation process}
\label{app:triangulation}
\begin{LabeledProof}{Theorem~\ref{thm:tri+lower-10p}}
  We first observe that during the triangulation process there is a 
  time $t$ when the number of missing edges is at least $m = O(\sqrt{k})$
  and the minimum degree is at least $n/3$. If $k<\frac{2}{3} n$ then 
  this is true initially and for larger $k$ this is true at the first 
  time $t$ the minimum degree is large enough. The second case follows since
  the degree of a node (and thus also the minimum degree) can at most
  double in each step guaranteeing that the minimum degree is not larger
  than $\frac{2}{3}n$ at time $t$ also implying that at least $\frac{n}{3} = \Omega(\sqrt{k})$
  edges are still missing. 
  
  Given the bound on the minimum degree any missing edge $\{u,v\}$ is added 
  by a fixed node $w$ with probability at most $\frac{9}{2n^2}$.
  Since there are at most $n-2$ such nodes the probability that a missing edge
  gets added is at most $\frac{9}{2n}$. To analyze the time needed for all missing
  edges to be added we denote with $X_i$ the random variable
  counting the number of steps needed until the $i$th of the $m$ missing edges is added.
  We would like to analyze $\prob{X_1\le T, X_2\le T, \dots, X_m \le T}$ for an 
  appropriately chosen number of steps $T$. Note that the
  events $X_i < T$ and $X_j <T$ are not independent and indeed can be positively
  or negatively correlated. Nevertheless, independent of the conditioning onto 
  any of the events $X_j < T$, we have that $\prob{X_1 \le T} \leq 1 - (1 - \frac{9}{2n})^T \leq 1 - \frac{1}{\sqrt{m}}$ for an appropriately chosen $T = \Omega(n \log m)$, where $m$ is again the number of missing edges at time $t$. Thus,
$$\prob{X_1\le T, X_2\le T, \dots, X_m \le T} =$$
$$= \prob{X_1\le T| X_2\le T, \dots, X_m \le T} \cdot \prob{X_2 \le T | X_3, \dots, X_m \le T} \cdot \ldots \cdot \prob{X_m \le T} $$
$$\leq \rb{1-\frac{1}{\sqrt{m}}}^m \; = \; O\rb{e^{-\sqrt{m}}} \; = \; O\rb{e^{-k^{1/4}}}$$

This shows that the triangulation process takes with probability at
least $1-O\rb{e^{-k^{1/4}}}$ at least $\Omega(n \log m) = O(n \log k)$
steps to complete.
\end{LabeledProof}

\section{Proofs for the two-hop walk}
\label{app:2hop}
\begin{LabeledProof}{Lemma~\ref{lem:mingrowth-1-10p}}
By the definition of $\md{0}$, $\dg{0}{w} \ge \md{0}$ for all
$w$ in $\nei{0}{1}{u}$.  Let $X$ be the first round at which
$|\nei{X}{2}{u}| \ge \md{0}/2$.  We consider two cases.  If $X$ is at
most $c n \log n$ for a constant $c$ to be specified later, then the
claim of the lemma holds.  In the remainder of this proof we consider
the case where $X$ is greater than $cn \log n$; thus, for $0 \le t \le
cn\log n$, $|\nei{t}{2}{u}| <\md{0}/2$.

Consider any node $w$ in $\nei{0}{1}{u}$.  Since $\dg{0}{w} \ge
\md{0}$ and $|\nei{t}{2}{u}| <\md{0}/2$, $w$ has at least $\md{0}/2$
edges to nodes in $\nei{0}{1}{u}$.  Fix a node $v$ in
$\nei{0}{2}{u}$.  In the following, we first show that in $O(n\log n)$
rounds, $v$ is strongly tied to the neighbors of $u$ with probability
at least $1-1/n^3$.  Let $T_1$ denote the first round at which $v$ has
is strongly tied to $\nei{T_1}{1}{u}$, i.e., when $|\nei{T_1}{1}{v}
\cap \nei{T_1}{1}{u}| \ge \md{0}/4$.  We know that $v$ has at least
one neighbor, say $w_1$, in $\nei{0}{1}{u}$.  Consider any $t < T_1$.
Since $v$ is weakly tied to $\nei{0}{1}{u}$ at time $t$, $w_1$ has at
least $\md{0}/4$ neighbors in $\nei{0}{1}{u}$ which do not have an
edge to $v$ at time $t$. This implies
\[\prob{v\mbox{ connects to a node in }\nei{0}{1}{u}\mbox{ through
  }w_1\mbox{ in round $t$}} \ge \frac{1}{n}\cdot\frac{1}{4} = \frac{1}{4n}\]

Let $e_1$ denote the event $\cub{v\mbox{ connects to a node in
  }\nei{0}{1}{u}}$, and $X_1$ be the number of rounds for $e_1$ to
occur.  When $e_1$ occurs, let $w_2$ denote a witness for $e_1$; i.e.,
$w_2$ is a node in $\nei{0}{1}{u}$ to which $v$ connects in round
$X_1$.  We note that $w_1,w_2\in \nei{0}{1}{u}\subseteq
\nei{X_1}{1}{u}$. If $v$ is weakly tied to $\nei{X_1}{1}{u}$, both
$w_1$ and $w_2$ have at least $\md{0}/4$ neighbors in
$\nei{X_1}{1}{u}$ that do not have an edge to $v$ yet. Let $e_2$
denote the event $\cub{v\mbox{ connects to a node in
  }\nei{X_1}{1}{u}}$, and $X_2$ be the number of rounds for $e_2$ to
occur. Then $\prob{e_2} = 2\prob{e_1} \ge 1/2n$.  Similarly, we define
$e_3,X_3,\dots, e_{\md{0}/4},X_{\md{0}/4}$ and obtain $\prob{e_i} \ge
i/(4n)$.  We now apply Lemma~\ref{lem:couponcorollary-10p} to obtain that
$X_1+X_2+\ldots X_{\md{0}/4}$ is at most $16n\ln n$ with probability
at least $1 - 1/n^3$.  Thus, with probability at least $1 -
|\nei{0}{2}{u}|/n^3$, $T_1 \le 16n \ln n$.  After $T_1$ rounds, we
obtain that for any $v \in \nei{0}{2}{u}$,
\[\prob{u\mbox{ connects to }v\mbox{ in a single round}} \ge \frac{\md{0}/4}{2\md{0}}\cdot\frac{1}{n} = \frac{1}{8n}.\]
which implies that with probability at least $1 - 1/n^3$, $u$ has an
edge to every node in $\nei{0}{2}{u}$ in another $T_2 \le 24 n \ln
n$ rounds.

Let $T_3$ equal $T_1 + T_2$; we set $c$ to be at least $120\ln 2$ so
that $X > 3T_3$.  We thus have $\nei{0}{2}{u}\subseteq
\nei{T_3}{1}{u}$, $\nei{0}{3}{u}\subseteq \nei{T_3}{1}{u}\cup
\nei{T_3}{2}{u}$, and $\nei{0}{4}{u}\subseteq \nei{T_3}{1}{u}\cup
\nei{T_3}{2}{u}\cup \nei{T_3}{3}{u}$. We now repeat the above analysis
again twice and obtain that at time $T = 3T_3$, $\nei{0}{2}{u}\cup
\nei{0}{3}{u}\cup \nei{0}{4}{u}\subseteq \nei{T}{1}{u}$ with
probability at least $1-\abs{\nei{0}{2}{u}\cup \nei{0}{3}{u}\cup
  \nei{0}{4}{u}}/n^3 \ge 1-1/n^2$.  By Lemma \ref{lem:moreneighbor-10p},
we have $\abs{\nei{T}{1}{u}}\ge \min\cub{2\md{0}, n-1}$, thus
completing the proof of the lemma.
\end{LabeledProof}

\begin{LabeledProof}{Lemma~\ref{lem:mingrowth-2-10p}}
Let $X$ be the first round at which $\nei{X}{2}{u} < \md{0}/4$. We
consider two cases. If $X$ is at most $cn\log n$ for a constant $c$ to
be specified later, then the claim of the lemma holds. In the
remainder of this proof we consider the case where $X$ is greater than
$cn\log n$; thus, for $0\le t\le cn\log n$, $\abs{\nei{t}{2}{u}} \ge
  \md{0}/4$. If $v\in \nei{0}{2}{u}$ is strongly tied to $\nei{0}{1}{u}$,
then
\[\prob{u\mbox{ connects to }v\mbox{ in a single round}} \ge
\frac{\dgi{t}{v}{\nei{0}{1}{u}}}{\abs{\nei{t}{1}{u}}}\cdot
\frac{1}{n}\ge \frac{\md{0}/4}{(1+1/8)\md{0}}\cdot \frac{1}{n} =
\frac{2}{9n}\] Thus, in $T = 13.5n\ln n$ rounds, $u$ will add an edge
to $v$ with probability at least $1-1/n^3$.  If there are at least
$\md{0}/8$ nodes in $\nei{0}{2}{u}$ that are strongly tied to
$\nei{0}{1}{u}$, then $u$ will add edges to all these nodes in $T$
rounds with probability at least $1-1/n^2$.

In the remainder of this proof, we focus on the case where the number
of nodes in $\nei{0}{2}{u}$ that are strongly tied to
$\nei{0}{1}{u}$ at the start of round $0$ is less than $\md{0}/8$.  In
this case, because $\abs{\nei{t}{2}{u}}\ge \md{0}/4$, more than
$\md{0}/8$ nodes in $\nei{0}{2}{u}$ are weakly tied to
$\nei{0}{1}{u}$, and, thus, have at least $3\md{0}/4$ edges to
nodes in $\nei{0}{2}{u}\cup \nei{0}{3}{u}$.

In the following we show $u$ will connect to $\md{0}/8$ nodes in
$O(n\log n)$ rounds with probability at least $1-1/n^2$.  For any
round $t$, let $W_t$ denote the set of nodes in $\nei{t}{2}{u}$
that are weakly tied to $\nei{t}{1}{u}$.  We refer to a length-2 path
from $u$ to a node two hops away as an {\em out-path}.  Let $P_0$
denote the set of out-paths to $W_0$.  Since we have at least
$\md{0}/8$ nodes in $\nei{0}{2}{u}$ that are weakly tied to
$\nei{0}{1}{u}$, $\abs{P_0}$ is at least $\md{0} /8$ at time
$t=0$. Define $e_1=\cub{u \mbox{ picks an out-path in $P_0$ and
    connects to node }v_1\mbox{ in }\nei{0}{2}{u}}$, and $X_1$ to be
the number of rounds for $e_1$ to occur.  When $0\le t\le X_1$, for
each $w_i\in \nei{t}{1}{u}$, let $f_i$ be the number of edges from
$w_i$ to nodes in $\nei{t}{1}{u}\cup \nei{t}{2}{u}$, and $p_i$ be
the number of edges from $w_i$ to nodes in $\nei{0}{2}{u}$ that are
weakly tied to $\nei{0}{1}{u}$.
\begin{eqnarray*}
\prob{e_1} 
&=& \sum_i \frac{1}{\dg{t}{u}} \cdot \frac{p_i}{f_i} 
\;\;\;\ge\;\;\; \sum_i \frac{1}{\dg{t}{u}} \cdot \frac{p_i}{n-1} 
\;\;\;=\;\;\; \frac{\sum_i p_i}{(1+1/8)\md{0}(n-1)} \\
&=& \frac{\abs{S}}{(1+1/8)\md{0}(n-1)} 
\;\;\;\ge\;\;\; \frac{\md{0}/8}{(1+1/8)\md{0}(n-1)} 
\;\;\;\ge\;\;\; \frac{1}{9n}.
\end{eqnarray*}
After $X_1$ rounds, $u$ will pick an out-path in $P_0$ and connect
such a $v_1$.  Define $P_1$ to be a set of out-paths from $u$ to
$W_{X_1}$.  We now place a lower bound on $|P_1 \setminus P_0|$.
Since $v_1\in \nei{0}{2}{u}$ is added to $\nei{X_1}{1}{u}$, those
out-paths in $P_0$ consisting of edges from $v_1$ to nodes in
$\nei{0}{1}{u}$ are not in $P_1$. The number of out-paths we lose
because of this is at most $\md{0}/4$.  But $v_1$ also has at least
$3\md{0}/4$ edges to $\nei{0}{2}{u}\cup \nei{0}{3}{u}$. The end points
of these edges are in $\nei{X_1}{1}{u}\cup \nei{X_1}{2}{u}$. If more
than $\md{0}/8$ of them are in $\nei{X_1}{1}{u}$, then $\dg{X_1}{u}\ge
(1+1/8)\md{0}$. Now let's consider the case that less than $\md{0}/8$
such end points are in $\nei{X_1}{1}{u}$. This means the number of
edges from $v_1$ to $\nei{X_1}{2}{u}$ is at least $3\md{0}/4 -
\md{0}/4 - \md{0}/8 = 3\md{0}/8$.  Among the end points of these
edges, if more than $\md{0}/8$ of them are strongly tied to
$\nei{X_1}{1}{u}$, then the degree of $u$ will become at least
$(1+1/8)\md{0}$ in $O(n \log n)$ rounds with probability $1-1/n^2$ by
our earlier argument. If not, we know that more than $\md{0}/4$ newly
added edges are pointing to nodes that are weakly tied to
$\nei{X_1}{1}{u}$.  Thus, $\abs{P_1 \setminus P_0}$ is by at least
$\md{0}/4$. $\abs{S} \ge 2\cdot \md{0}/8$. Define $e_2=\cub{u\mbox{
    picks an out-path in $P_1$ and connects to node }v_2}$, and $X_2$
to be the number of rounds for $e_2$ to occur. During time $X_1\le
t\le X_2$, $\prob{e_2}$ is at least $2\cdot\frac{1}{9n}$.  Similarly,
we define $e_3,X_3,\dots,e_{\md{0}/8},X_{\md{0}/8}$ and derive
$\prob{e_i} \ge i/(9n)$. By Lemma \ref{lem:couponcorollary-10p}, the
number of rounds for $\dg{t}{u}\ge (1+1/8)\md{0}$ is bounded by
\[T=X_1+X_2+\dots+X_{\md{0}/8} \le (2+1)9n\ln n = 27n\ln n\]
with probability at least $1-1/n^2$, completing the proof of this
lemma.
\end{LabeledProof}

\begin{LabeledProof}{Theorem~\ref{thm:graph+randwalk-10p}}
We first show that in time $T=O(n\log n)$ time, the minimum degree of
the graph increases by a factor of $1/8$, i.e., $\md{T} \ge
\min\cub{(1+1/8)\md{0}, n-1}$. Then we can apply this argument $O(\log n)$
times, and thus, complete the proof of this theorem.

For each $u$ where $\dg{0}{u} < \min\cub{(1+1/8)\md{0}, n-1}$, we
analyze by the following 2 cases. First, if $|\nei{0}{2}{u}| \ge \md{0}/2$, by Lemma
\ref{lem:mingrowth-2-10p} we know as long as $|\nei{t}{2}{u}|\ge \md{0}/4$
for all $t\ge 0$, $\dg{T}{u} \ge\min\cub{(1+1/8)\md{0},n-1}$ with
probability $1-1/n^2$ where $T=O(n\log n)$. Whenever the condition is
not satisfied, we know at least $\md{0}/4$ nodes in $\nei{0}{2}{u}$
has been moved to $\nei{T}{1}{u}$, which means $\dg{T}{u}
\ge\min\cub{(1+1/4)\md{0},n-1}$.

Second, if $|\nei{0}{2}{u}| < \md{0}/2$, by Lemma
\ref{lem:mingrowth-1-10p} we know as long as $|\nei{t}{2}{u}| < \md{0}/2$
for all $t\ge 0$, $\dg{T}{u} \ge\min\cub{(1+1/8)\md{0},n-1}$ with
probability $1-1/n^2$ where $T=O(n\log n)$. Whenever the condition is
not satisfied, we are back to the analysis in the first case, and the
minimum degree will become $\min\cub{(1+1/8)\md{0},n-1}$ with
probability $1-1/n^2$.

Combining the the above two cases we get that with
probability $1-1/n$ the minimum degree of $G$
will become at least $\min\cub{(1+1/8)\md{0},n-1}$ in $O(n\log n)$ rounds,
since there are at most $n$ nodes whose degree is between
$\md{0}$ and $\min\cub{(1+1/8)\md{0},n-1}$,

Now we can apply the above argument $O(\log n)$ times, and have shown
the two-hop walk process completes in $O(n\log^2 n)$ with high
probability. 
\end{LabeledProof}

\section{Proofs for the two-hop walk in directed graphs}
\label{app:directed}
\begin{LabeledProof}{Theorem~\ref{thm:directed.upper-10p}}
Consider any pair of nodes, $u$ and $v$.  Consider a shortest path
from $u$ to $v$ $\rb{v_0,v_1,v_2,\dots,v_m}$, where $v_0 = u$, $v_m =
v$ and $m \le n$.  Fix a time step $t$.  Let $e_i$ denote the event an
edge is added from $v_i$ to $v_{i+2}$ in step $t$. The probability of
occurrence of $e_i$ is $\prob{e_i}\ge 1/n^2$. All the $e_i$'s are
independent from one another.
\begin{eqnarray*}
\prob{\cup_i e_i} &\ge& \sum_i \prob{e_i} - \sum_i \sum_j \prob{e_i\cap e_j} \\
 &=& \sum_i \prob{e_i} - \sum_i \sum_j \prob{e_i} \prob{e_j} \\
 &\ge& m\frac{1}{n^2} - m(m-1) \frac{1}{n^4} \\
 &\ge& \frac{m}{n^2}
\end{eqnarray*}
Let $X_1$ denote the number of steps it takes for the length of the
above path to decrease by $1$.  It is clear that $E[X_1] \le n^2/m$.
In general, let $X_i$ denote the number of steps it takes for the
length of the above path to decrease by $i$.  By
Lemma~\ref{lem:couponcorollary-10p}, the number of steps it takes for
the above path to shrink to an edge is at most $4 n^2 \ln n$ with
probability $1/n^3$.  Taking a union bound over all the edges yields
the desired upper bound.

For the lower bound, consider a graph $\gf{0}$ with the node set
$\{1, 2, \ldots, n\}$ and the edge set 
\[
\{(3i,j), (3i+1,j): 0 \le i < n/4, 3n/4 \le j < n \} \bigcup \{(3i,
3i+1), (3i+1, 3i+2): 0 \le i < n/4\}.
\] 
The only edges that need to be added by the two-hop process are the
edges $(3i,3i+2)$ for $0 \le i < n/4$.  The probability that node
$3i$ adds the edge $(3i,3i+2)$ in any round is at most $16/n^2$.  The
probability that edge $(3i,3i+2)$ is not added in $(n^2 \ln n)/32$
rounds is at least $1/\sqrt{n}$.  Since the events associated with
adding each of these edges are independent, the probability that all
the $n/3$ edges are added in $(n^2 \ln n)/32$ rounds is at most $(1 -
1/\sqrt{n})^{n/3} \le e^{-\sqrt{n}/3}$, completing the lower bound
proof.
\end{LabeledProof}

\smallskip

\begin{LabeledProof}{Theorem~\ref{thm:directed.lower-10p}}
The graph $\gf{0}=(V,E)$ is depicted in Figure~\ref{fig:digraph+lower}
and formally defined as $\gf{0} = (V,E)$ where $V=\cub{1,2,\dots,n}$
with $n$ being even, and
\[
E = \cub{\rb{i,j}: 1 \le i,j \le n/2} \cup \cub{\rb{i,i+1} : n/2 \le i <
  n} \cup \cub{\rb{i,j}: i > j, i > n/2, i,j\in V}.
\]
\begin{figure}[ht]
\begin{center}
  \includegraphics[width=5in]{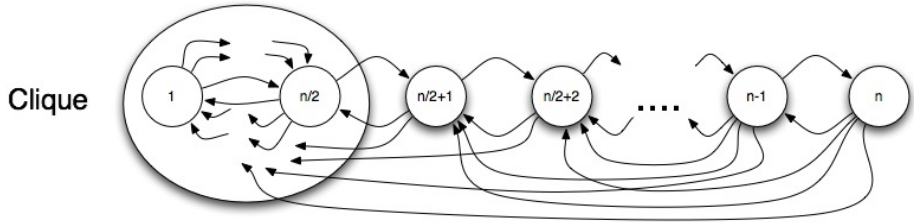}
  \caption{Lower bound example for two-hop walk process in directed graphs}
  \label{fig:digraph+lower}
\end{center}
\end{figure}
We first establish an upper bound on the probability that edge
$(i,i+h)$ is added by the start of round $t$, for given $i$, $1 \le i
\le n -h$.  Let $\prb{h}{t}$ denote this probability.  The following
base cases are immediate: $\prb{h}{0}$ is $1$ for $h = 1$ and $h < 0$,
and $0$ otherwise.  Next, the edge $(i,i+h)$ is in $\gf{t+1}$ if and
only if $(i,i+h)$ is either in $\gf{t-1}$ or added in round $t$.  In
the latter case, $(i,i+h)$ is added by a two-hop walk $i \rightarrow i
+ k \rightarrow i + h$, where $-i < k \le n - i$.  Since the
out-degree of every node is at least $n/2$, for any $k$ the
probability that $i$ takes such a walk is at most $4/n^2$.
\begin{eqnarray}
\prb{h}{t+1} & \le & \prb{h}{t} + \frac{4}{n^2} \sum_{k > -i}^{n-i} \prb{k}{t}\prb{h-k}{t} \nonumber\\
& = & \prb{h}{t} + \frac{4}{n^2} \left(\sum_{k = 1}^{i-1} \prb{h+k}{t} + \sum_{k=1}^{h-1} \prb{k}{t}\prb{h-k}{t} + \sum_{k=h+1}^{n-i} \prb{k}{t}\right) \label{eqn:lower}
\end{eqnarray}
We show by induction on $t$ that 
\begin{eqnarray}
\prb{h}{t} \le \left(\frac{\alpha t}{n^2}\right)^{h-1}, \mbox{ for all } t \le \eps n^2  \label{eqn:prb}
\end{eqnarray}
where $\alpha$ and $\eps$ are positive constants that are specified
later.

The induction base is immediate.  For the induction step, we use the
induction hypothesis for $t$ and Equation~\ref{eqn:lower} and bound
$\prb{h}{t+1}$ as follows.
\begin{eqnarray*}
\prb{h}{t+1} & \le & \left(\frac{\alpha t}{n^2}\right)^{h-1} + \frac{4}{n^2} \left(\sum_{k = 1}^{i-1} \left(\frac{\alpha t}{n^2}\right)^{h+k-1} + \sum_{k=1}^{h-1} \left(\frac{\alpha t}{n^2}\right)^{k-1} \left(\frac{\alpha t}{n^2}\right)^{h-k-1} + \sum_{k=h+1}^{n-i} \left(\frac{\alpha t}{n^2}\right)^{k-1}\right)\\
& \le & \left(\frac{\alpha t}{n^2}\right)^{h-1} + \frac{4}{n^2} \left( (h-1)\left(\frac{\alpha t}{n^2}\right)^{h-2} + \left(\frac{\alpha t}{n^2}\right)^{h} \frac{2}{1 - \alpha t/n^2}\right)\\
& \le & \left(\frac{\alpha t}{n^2}\right)^{h-1} + (h-1)\left(\frac{\alpha t}{n^2}\right)^{h-2}\frac{1}{n^2} \left(4 + \frac{4\eps^2}{(1 - \alpha\eps)}\right)\\
& \le & \left(\frac{\alpha t}{n^2}\right)^{h-1} + (h-1) \left(\frac{\alpha t}{n^2}\right)^{h-2}\frac{\alpha}{n^2}\\
& \le & \left(\frac{\alpha (t+1)}{n^2}\right)^{h-1}.
\end{eqnarray*}
(In the second inequality, we combine the first and third summations
and bound them by their infinite sums.  In the third inequality, we
use $t \le \eps n^2$.  For the fourth inequality, we set $\alpha$
sufficiently large so that $\alpha \ge 4 + 4/(1-\alpha \eps)$.  The
final inequality follows from Taylor series expansion.)

For an integer $x$, let $C_x$ denote the cut $(\{u: u \le x\}, \{v, v
> x\})$.  We say that a cut $C_x$ is {\em untouched}\/ at the start of
round $t$ if the only edge in $\gf{t}$ crossing the cut $C_x$ is the
edge $(x, x+1)$; otherwise, we say $C_x$ is {\em touched}.  Let $X$
denote the smallest integer such that $C_X$ is untouched.  We note
that $X$ is a random variable that also varies with time.  Initially,
$X = n/2$.

We divide the analysis into several phases, numbered from $0$.  A
phase ends when $X$ changes.  Let $X_i$ denote the value of $X$ at the
start of phase $i$; thus $X_0 = n/2$.  Let $T_i$ denote the number of
rounds in phase $i$.  A new edge is added to the cut $C_{X_i}$ only if
either $X_i$ selects edge $(X_i,X_i+1)$ as its first hop or a node
$u < X_i$ selects $u \rightarrow X_i \rightarrow X_i + 1$.  Since the
degree of every node is at least $n/2$, the probability that a new
edge is added to the cut $C_i$ is at most $2/n + n(4/n^2) = 6/n$,
implying that $E[T_i] \ge n/6$.

We now place a bound on $X_{i+1}$.  Fix a round $t \le \eps n^2$, and
let $E_x$ denote the event that $C_x$ is touched by round $t$.  We
first place an upper bound on the probability of $E_x$ for arbitrary
$x$ using Equation~\ref{eqn:prb}.
\[
\Pr[E_x] \le \sum_{h \ge 2} h \left(\frac{\alpha t}{n^2} \right)^{h-1} \le \frac{\alpha t (4 - 3(\alpha t)/n^2 + (\alpha t)^2/n^4)}{n^2(1-(\alpha t)/n^2)^3},
\]
for $t \le \eps n^2$, where we use the inequality $\sum_{h \ge 2} h^2
\delta^h = \delta(4-3\delta+\delta^2)/(1-\delta)^3$ for $0 < \delta <
1$.  We set $\eps$ sufficiently small so that
$(4-3\eps+\eps^2)/(1-\eps)^3 \le 5$, implying that the above
probability is at most $5\eps$.

If $E_x$ were independent from $E_y$ for $x \neq y$, then we can
invoke a straightforward analysis using a geometric probability
distribution to argue that $E[X_{i+1} - X_i]$ is at most $1/(1-5\eps)
= O(1)$; to see this, observe that $X_{i+1} - X_i$ is stochastically
dominated by the number of tosses of a biased coin needed to get one
head, where the probability of tail is $5\eps$.  The preceding
independence does not hold, however; in fact, for $y > x$, $\Pr[E_y
  \mod E_x] > \Pr[E_y]$.  We show that the impact of this correlation
is very small when $x$ and $y$ are sufficiently far apart.  We
consider a sequence of cuts $C_{x_1}, C_{x_2}, \ldots, C_{x_\ell},
\ldots$ where $x_\ell = x_{\ell-1} + c\ell$, for a constant $c$ chosen
sufficiently large, and we set $x_0 = X_i + 2$.  We bound the
conditional probability of $E_{x_\ell}$ given $E_{x_{\ell -1}} \cap
E_{x_{\ell-2}} \cdots \cap E_{x_1}$ as follows.
\begin{eqnarray*}
& & \Pr[E_{x_\ell} | E_{x_{\ell -1}} \cap E_{x_{\ell-2}} \cdots E_{x_1}]\\
& = & \frac{\Pr[E_{x_\ell} \cap E_{x_{\ell -1}} \cap E_{x_{\ell-2}} \cdots E_{x_1}]}{\Pr[E_{x_{\ell -1}} \cap E_{x_{\ell-2}} \cdots E_{x_1}]}\\
& \le & \frac{\Pr[E_{x_{\ell -1}} \cap E_{x_{\ell-2}} \cdots E_{x_1} \cap (C_{x_\ell} \cap (C_{x_{\ell-1}} \cup \cdots \cup C_{x_1}) = \emptyset)]}{\Pr[E_{x_{\ell -1}} \cap E_{x_{\ell-2}} \cdots E_{x_1}]} +\\
& & 
\frac{\Pr[E_{x_{\ell -1}} \cap E_{x_{\ell-2}} \cdots E_{x_1} \cap (C_{x_\ell} \cap (C_{x_{\ell-1}} \cup \cdots \cup C_{x_1}) \neq \emptyset)}{\Pr[E_{x_{\ell -1}} \cap E_{x_{\ell-2}} \cdots E_{x_1}]}\\
& \le & \frac{\Pr[E_{x_{\ell -1}} \cap E_{x_{\ell-2}} \cdots E_{x_1}] \Pr[\mbox{a new edge is added from } (x_{\ell-1} +1, x_\ell) \mbox{ to } (x_\ell+1,n]]}{\Pr[E_{x_{\ell -1}} \cap E_{x_{\ell-2}} \cdots E_{x_1}]}\\
& & \frac{\Pr[\mbox{an edge spanning at least } c\ell \mbox{ hops is added across } C_{x_\ell}]}{\Pr[E_{x_{\ell -1}} \cap E_{x_{\ell-2}} \cdots E_{x_1}]}\\
& \le & \Pr[E_{x_\ell}] + \frac{((\alpha t)/n^2)^{c\ell-1}}{(1 - \alpha t/n^2)^2(t/n^2)^{\ell}}\\
& \le & 5\eps + \eps = 6\eps,
\end{eqnarray*}
where we set $c$ sufficiently large in the last step.  Since $X_{i+1}$
is at most the smallest $x_\ell$ such that $C_{x_\ell}$ is untouched, 
we obtain that
\begin{eqnarray*}
E[X_{i+1} - X_i] \le 2 + \sum_{\ell \ge 2} (6 \eps)^\ell c\ell^2 \le c',
\end{eqnarray*}
for a constant $c'$ chosen sufficiently large.  We thus obtain that
after $\eps' n$ phases, $E[X]$ is at most $\eps'c' n$, where $\eps'$
is chosen sufficiently small so that $n - E[X]$ is $\Omega(n)$.  Since
the expected length of each phase is at least $n/6$, it follows that
the expected number of rounds it takes for the two-hop process to
complete is $\Omega(n^2)$ rounds.

\junk{Let $e_i$ be the event of
  adding an edge from node $i$ to node $i+2$. $\prob{e_i}\le 4/n^2$
  $\forall i\ge n/2$. And all $e_i$'s are independent. We will show it
  takes $\Omega(n^{2})$ time for all $e_i$'s ($i\ge n/2$) to
  happen. Let $r_i$ be the event of adding an edge from node $i$ to
  node $i+3$. $\forall i$, the probability that $r_i$ happens in
  $n^{2}/4$ rounds is
\begin{eqnarray*}
\prob{r_i\mbox{ in }n^{2}/4\mbox{ rounds}} &=& {n^{2}/4 \choose 2}\prob{e_i}\prob{r_i|e_i} \\
&=& \frac{n^2/4(n^2/4-1)}{2} \frac{4}{n^2} \frac{4}{n^2} \\
&\le& \frac{1}{2}
\end{eqnarray*}
Therefore,
\[\prob{\mbox{more than }\ln n\mbox{ } r_i\mbox{'s happen in }n^2/4\mbox{ rounds}} \le \frac{1}{n}\]
\begin{figure}[ht]
\begin{center}
  \includegraphics[width=3in]{diexample2.jpg}
  \caption{Bad events of 2-step random walk in directed graph}
  \label{fig:digraph+lower+badevent}
\end{center}
\end{figure}
As shown in Figure \ref{fig:digraph+lower+badevent}, such $r_v$ event
will increase the probability of $e_{v+1}$. Since such bad event won't
be more than $\ln n$ with probability at least $1-1/n$, we can look at
the other $n/2-\ln n$ $e_i$'s. In $n^2/4$ rounds, with constant
probability not all $e_i$'s will complete. Thus, the lower bound for
2-hop random walk process in directed graph is $\Omega(n^2)$.
}

\end{LabeledProof}

\end{document}